\newcommand{\qed}{\hfill\rule{2mm}{2mm}}
\newenvironment{proof}{\begin{trivlist}\item[\hspace{\labelsep}{\bf\noindent Proof. }]}{\end{trivlist}}
\newcommand{\co}{{\tt cost}}
\newcommand{\vc}{{\tt vc}}
\newcommand{\wgt}{{\tt wgt}}
\newtheorem{theorem}{Theorem}
\newtheorem{lemma}[theorem]{Lemma}
\newtheorem{claim}[theorem]{Claim}
\newtheorem{cor}[theorem]{Corollary}
\newtheorem{defn}[theorem]{Definition}
\renewcommand{\H}{{\cal H}}
\title{Enforcing efficient equilibria in network design games via subsidies\thanks{This work was partially supported by the grant NRF-RF2009-08 ``Algorithmic aspects of coalitional games'' and by the EC-funded STREP project EULER.}}
\author{John Augustine\thanks{Division of Mathematical Sciences, School of Physical and Mathematical Sciences, Nanyang Technological University, Singapore. Email: {\tt jea@ics.uci.edu}} \and Ioannis Caragiannis\thanks{Research Academic Computer Technology Institute \& Department of Computer Engineering and Informatics, University of Patras, 26500 Rio, Greece. Email: {\tt caragian@ceid.upatras.gr}}  \and Angelo Fanelli\thanks{Division of Mathematical Sciences, School of Physical and Mathematical Sciences, Nanyang Technological University, Singapore. Email: {\tt angelo.fanelli@ntu.edu.sg}} \and Christos Kalaitzis\thanks{Department of Computer Engineering and Informatics, University of Patras, 26500 Rio, Greece. Email: {\tt kalaitzis@ceid.upatras.gr}} }
\begin{document}

\maketitle

\begin{abstract}
The efficient design of networks has been an important engineering task that involves challenging combinatorial optimization problems. Typically, a network designer has to select among several alternatives which links to establish so that the resulting network satisfies a given set of connectivity requirements and the cost of establishing the network links is as low as possible. The {\sc Minimum Spanning Tree} problem, which is well-understood, is a nice example.

In this paper, we consider the natural scenario in which the connectivity requirements are posed by selfish users who have agreed to share the cost of the network to be established according to a well-defined rule. The design proposed by the network designer should now be consistent not only with the connectivity requirements but also with the selfishness of the users. Essentially, the users are players in a so-called network design game and the network designer has to propose a design that is an equilibrium for this game. As it is usually the case when selfishness comes into play, such equilibria may be suboptimal. In this paper, we consider the following question: can the network designer enforce particular designs as equilibria or guarantee that efficient designs are consistent with users' selfishness by appropriately subsidizing some of the network links? In an attempt to understand this question, we formulate corresponding optimization problems and present positive and negative results.
\end{abstract}

\section{Introduction}
Network design is a rich class of combinatorial optimization problems that model important engineering questions arising in modern networks. In an ideal scenario, a network designer that acts on behalf of a central authority is given an edge-weighted graph representing the potential links between nodes and their operation cost, and connectivity requirements between the nodes. The objective of the network designer is to compute a subgraph (the network to be established) of minimum cost that satisfies all connectivity requirements. Depending on the structure of the connectivity requirements, this definition leads to many optimization problems ranging from problems that are well-understood and efficiently solvable such as the {\sc Minimum Spanning Tree} to problems whose optimal solutions are even hard to approximate.

In this paper, we consider the scenario in which users are selfish and have agreed to a well-defined rule according to which they will share the cost of the network to be established. The connectivity requirements are now posed by the users; each user wishes to connect two specific nodes. A design should satisfy each connectivity requirement through a path connecting these two nodes in the established network. According to the particular cost sharing rule we consider, the corresponding user will then share the cost of each link in her path with the other users that use this link. Even though the network designer can still resort to the rich toolset of network design algorithms in order to propose a network of reasonable cost, this approach neglects the selfish behavior of the users. A user may not be satisfied with the current design since a different path that satisfies her connectivity requirement may cost her less. Then, she could unilaterally propose an alternative path that possibly includes links that were not in the proposal of the network designer. Other users could also act similarly and these negotiations compute the network to be established in a chaotic manner. The role of the network designer is almost canceled and, furthermore, it is not clear when the selfish users will reach an agreement (if they ever do) and, even if they do, whether this agreement will be really beneficial for the users as a whole, i.e., whether the total (or social) cost of the established network will be reasonable. So, the goal of the network designer is to propose a design (i.e., a network and, subsequently, a path to each user and an associated cost) that not only meets the connectivity requirements of the users but is also consistent with their selfish nature. Furthermore, since the network designer acts on behalf of the central authority, the design should be efficient, i.e., the network to be established should have reasonable social cost. Essentially, the users are engaged as players in a non-cooperative strategic game, called a {\em network design game}, and the role of the network designer is to propose an efficient design that is an equilibrium of this game.

Typically, efficiency is not an easy goal when selfishness comes into play. This leads to the following question which falls within one of the main lines of research in {\em Algorithmic Game Theory}: how is the social cost affected by selfish behavior? The notion of the {\em price of anarchy} (introduced in the seminal paper of Koutsoupias and Papadimitriou \cite{KP99}; see also \cite{P01}) can quantify this relation. Expressed in the context of a network design game, it would be defined as the ratio of the social cost of the worst possible Nash equilibrium over the social cost of an optimal design. Hence, it is pessimistic in nature and (as its name suggests) provides a worst-case guarantee for conditions of total anarchy. Instead, the notion of the {\em price of stability} that was introduced by Anshelevich et al. \cite{ADK+04} is optimistic in nature and quantifies how easy the job of the network designer is. It is defined as the ratio of the social cost of the best equilibrium over the cost of the optimal design and essentially asks: what is the best one can hope from a design given that the players are selfish?

Unfortunately, the price of stability can be large which would mean that every design that is consistent with selfishness has high social cost. The central authority could then intervene in order to mitigate the impact of selfishness. One solution that seems natural would be to contribute to the social cost of the network to be established by partially subsidizing some of the network links.
According to this scenario, the network designer has to compute a design and decide which links in the established network should be subsidized by the central authority. The users will then share the unsubsidized portion of the cost of the network links they use. Essentially, they will be involved in a new network design game and the goal of the network designer should be to guarantee that the design and the subsidies computed induce an equilibrium for this new game. Let us take this approach to its extreme in order to show that it is a feasible one. The network designer simply computes a design of low social cost (ignoring the issue of selfishness) which is fully subsidized by the central authority. The cost of each user is now zero and the design is obviously consistent with their selfishness. The problem becomes non-trivial when the central authority runs on a limited budget. What is the best design the network designer can guarantee given this budget? Alternatively, what is the minimum amount of subsidies sufficient in order to achieve a given social cost? Can optimality be achieved? Can the corresponding designs be computed efficiently?

\medskip\noindent{\bf Problem statement.} In an attempt to understand these questions, we introduce and study two optimization problems that arise in this context. Informally, they can be defined as follows. In {\sc Stable Network Enforcement} (SNE), we are given a network design game on a graph together with a particular target network $T$, and we wish to compute the minimum amount of subsidies that have to be put on the links of $T$ so that the design is acceptable to the users. In {\sc Stable Network Design} (SND), we are given a particular budget together with the input game, and we wish to compute a network $T$ that satisfies the connectivity requirements and to assign an amount of subsidies to the links of $T$ within the stipulated budget so that the design is acceptable to the users. The objective is to minimize the social cost of $T$. Besides the standard version of both problems, we also consider their {\em all-or-nothing} version in which a link can either be fully subsidized or not subsidized at all.

Even though some of our results apply to general network design games, we have placed emphasis on a special class of network design games, called {\em broadcast games}. In such a game, there is a special node in the input graph called the {\em root}. There is one player associated with each distinct non-root node and her connectivity requirement is a path from her associated node to the root. A nice property of such games is that an optimal design is a solution of the {\sc Minimum Spanning Tree} problem on the input graph and can be computed efficiently. Even in this seemingly simple case, as we will see, selfish behavior of players imposes challenging restrictions. Furthermore, we consider network design games in undirected graphs. Note that this strengthens our results since they can be adapted easily to network design games on directed graphs and, furthermore, undirected network design games are less understood in terms of their price of stability (see the discussion below).

\medskip\noindent{\bf Related work.}
Strategic games that arise from network design scenarios have received much attention in the {\em Algorithmic Game Theory} literature. The first related paper is probably \cite{ADTW03}. The particular network design games that we consider in the current paper were introduced by Anshelevich et al. in \cite{ADK+04}. An important observation made there is that network design games admit a potential function that was proposed by Rosenthal \cite{R73} for a broader class of games called congestion games. A potential function over all designs has the property that the difference in the potential of two designs that differ in the strategy of a single player equals the difference of the cost of that player in these designs; hence, a design that locally minimizes the potential function is a Nash equilibrium. Using a simple but elegant argument, Anshelevich et al. \cite{ADK+04} proved that the price of stability is at most $\H_n$, the $n$-th harmonic number, where $n$ is the number of players. Their proof considers a Nash equilibrium that can be reached from an optimal design when the players make arbitrary selfish deviations. The main argument used is that the potential of the Nash equilibrium is strictly smaller than that of the optimal design and the proof follows due to the fact that the potential function of Rosenthal approximates the social cost of any design within a factor of at most $\H_n$. Much of the subsequent research on network design games has focused on providing tight bounds on the price of stability. The $\H_n$ bound is known to be tight for directed networks only. For undirected networks, better bounds on the price of stability of $O(\log\log n)$ for broadcast games and $O(\log n /\log\log{n})$ for generalizations known as multicast games are presented in \cite{Li09} and \cite{FKL+06}, respectively. Still, the best lower bounds are only constant; $2.245$ in general, $1.862$ for multicast, and $1.818$ for broadcast games \cite{BCFM10}. The papers \cite{CCL+07} and \cite{CKM+08} provide bounds on the quality of equilibria reached when players enter a multicast game one by one and play their best response and then (when all players have arrived) they concurrently play until an equilibrium is reached. They prove that the price of anarchy of the equilibria reached is at most polylogarithmic in the number of players.

Another intriguing question is related to the complexity of computing equilibria in such games. In general, the problem was recently proved to be PLS-hard \cite{S10}. The corresponding hardness reduction does not apply to multicast or broadcast games. Unfortunately, the classical approach of minimizing the potential function that has been proved useful in the case of network congestion games \cite{FPT04} cannot be applied to multicast games; the authors of \cite{CCL+07} prove that minimizing Rosenthal's potential function is NP-hard. Furthermore, as it is observed in \cite{S10}, computing an equilibrium of minimum social cost in multicast games is NP-hard. Approximate equilibria is the subject of the recent paper \cite{AL10}.

Monetary incentives in strategic games have been considered in many different contexts. Most of the work in {\em Mechanism Design} uses such incentives to motivate players to act truthfully (see \cite{N07} for an introduction to the field). The (non-exhaustive) list also includes their use in {\em Cooperative Game Theory} in order to encourage coalitions of players reach stability \cite{BEM+09} and as a means to stabilize normal form games \cite{MT04}. However, the particular use of monetary incentives in the current paper is substantially different and aims at improving the performance of the system the game represents. In this direction, other tools have also been considered recently. Besides their importance in creating income, the appropriate use of {\em taxes} can also improve system efficiency. A series of recent papers \cite{CDR06,CKK10,FJM04,KK04,S07} study the impact of taxes (or tolls) in the efficiency of network routing (extending early developments in the literature of the Economics of Transportation; see \cite{CDR06} and the references therein). {\em Stackelberg strategies} applied to routing \cite{KLO97}, scheduling \cite{KM02,R04,S07}, and, recently, network design games \cite{FFM10} aim to improve performance by introducing a set of non-selfish leaders whose strategies are controlled by the system designer and aim to motivate selfish players to reach efficient equilibria. {\em Coordination mechanisms} (applied so far to scheduling in \cite{AJM08,C09,CKN04,ILMS05}) aim to improve performance by introducing priorities among the players. An approach that is closer to ours has been followed in \cite{BLNO10} where subsidies are used in multicast games; unlike our approach, the subsidies are collected as taxes from the players in order to guarantee efficient worst-case equilibria.

\medskip\noindent{\bf Overview of results and roadmap.} In this paper we present the following results. First, we observe that SNE can be solved in polynomial time using linear programming; this observation applies to general instances of SNE. The linear program has an exponential number of constraints but can be solved using the ellipsoid method. A reformulation based on standard techniques yields an LP of polynomial size. For instances of SNE with broadcast games, we present a much simpler LP in which the number of variables and constraints is linear and quadratic in the number of players, respectively. On the other hand, SND is proved to be NP-hard even for broadcast instances. In particular, detecting whether a minimum spanning tree can be enforced as an equilibrium without using any subsidies is NP-hard. This result implies that detecting whether the price of stability of a given broadcast game is $1$ or not is NP-hard. In this direction, we have a stronger result: approximating the price of stability of a broadcast game is APX-hard. The last two statements significantly extend the NP-hardness result of \cite{S10} and indicate that, besides the rough estimates provided by the known bounds on the price of stability which hold for a broad class of games, the estimate the network designer can make about the most efficient designs of a particular broadcast game will also be rough. These results are presented in Section \ref{sec:complexity}.

Next, we consider broadcast instances of SNE and the question of how much subsidies are sufficient and necessary in order to enforce a given minimum spanning tree as an equilibrium. We show that this can be done using a percentage of $37\%$ of the weight of the minimum spanning tree as subsidies. The proof has two main components. First, we show how to prove this upper bound by decomposing the game into subgames with a significantly simpler structure than the original one. Second, in order to compute the subsidies in each subgame, we use a virtual approximation of the cost experienced by the players on the links of the network. We also demonstrate that our upper bound is tight: an amount of $37\%$ of the minimum spanning tree weight as subsidies may be necessary for some simple instances. These results are presented in Section \ref{sec:bounds}.

Surprisingly, in contrast to the standard case, we prove that the all-or-nothing version of SNE is hard to approximate within any factor even when restricted to instances with broadcast games. The corresponding proof is long and technically involved and indicates that the only approximation guarantee should bound the amount of subsidies as a constant fraction of the weight of the minimum spanning tree. Interestingly, we prove that significantly more subsidies may be necessary compared to the standard version of SNE. In particular, there are broadcast instances which require a percentage of $61\%$ of the weight of the minimum spanning tree as subsidies in order to enforce it as an equilibrium. These results are presented in Section \ref{sec:all-or-nothing}.

We begin with preliminary definitions and notation in Section \ref{sec:prelim} and conclude with interesting open problems in Section \ref{sec:open}.

\section{Definitions and notation}\label{sec:prelim}
A {\em network design game} consists of an edge-weighted undirected graph $G=(V,E,w)$, a set $N$ of $n$ players, and a source-destination pair of nodes $(s_i,t_i)$ for each player $i$. Each player wishes to connect her source to her destination and, in order to do this, she can select as a strategy any path $T_i$ connecting $s_i$ to $t_i$ in $G$. The tuple $T=(T_1, T_2, ..., T_n)$ that consists of the strategies of the players (with one strategy per player) is called a {\em state}. We say that player $i$ uses edge $a$ in $T$ if her strategy $T_i$ contains $a$. With some abuse in notation, we also denote by $T$ the set of edges included in strategies $T_1, T_2, ..., T_n$ as well as the subgraph of $G$ induced by these edges. We say that an edge $a\in E$ is established if at least one player uses edge $a$. Consider such an edge $a$ and let $n_a(T)$ be the number of players whose strategies in $T$ contain $a$. Throughout the paper, we also use the notation $n^i_a(T)$ to denote whether player $i$ uses edge $a$ ($n_a^i(T)=1$) or not ($n_a^i(T)=0$). Each player $i$ in $N$ experiences a cost of $\co_i(T)=\sum_{a\in T_i}{\frac{w_a}{n_a(T)}}$, i.e., the weight of each established edge is shared as cost among the players using it.

The state $T$ is called a (pure Nash) equilibrium if no player has an incentive to unilaterally deviate from $T$ in order to decrease her cost, i.e., for each player $i$ and possible strategy $T'_i$ that connects the source-destination pair $(s_i,t_i)$ in $G$, it holds that $\co_i(T)\leq \co_i(T_{-i},T'_i)$. The notation $T_{-i},T'_i$ denotes the state in which player $i$ uses strategy $T'_i$ and the remaining players use their strategies in $T$. Throughout the paper, we denote by $\wgt(A)$ the total weight of the set of edges $A$ in $G$, i.e., $\wgt(A)=\sum_{a\in A}{w_a}$. The quality of a state is measured by the total weight of the established edges. Since the weight of each established edge is shared as cost among the players that use it, the quality of a state coincides with the total cost experienced by all players, i.e., $\wgt(T)=\sum_i{\co_i(T)}$. The {\em price of stability} of a network design game is simply the ratio of the weight of the edges established in the best equilibrium over the optimal cost among all states of the game.

Given an edge-weighted graph $G=(V,E,w)$, a subsidy assignment $b$ is a function that assigns a subsidy $b_a\in [0,w_a]$ to each edge $a\in E$. The cost of a subsidy assignment is simply the sum of the subsidies on all edges of $G$, i.e., $\sum_{a\in E}{b_a}$. We use the term {\em all-or-nothing} to refer to subsidies that are constrained so that $b_a\in \{0,w_a\}$ for each edge $a\in E$. Given a set of edges $A$ in $G$, we use the notation $b(A)$ in order to refer to the total amount of subsidies assigned to the edges of $A$ in the subsidy assignment $b$, i.e., $b(A)=\sum_{a\in A}{b_a}$. We refer to $b(E)$ as the cost of the subsidy assignment $b$. Given a network design game on a graph $G$ and a subsidy assignment $b$ on the edges of $G$, we use the term {\em extension} of the original game with subsidies $b$ in order to refer to the network design game on graph $G$ (with the same players and strategy sets as in the original game) with the only difference being that the cost of a player at a state $T$ is now $\co_i(T;b) = \sum_{a\in T_i}{\frac{w_a-b_a}{n_a(T)}}$. When a particular state $T$ is an equilibrium of the extension of the original game with subsidies $b$, we say that the subsidy assignment $b$ {\em enforces} $T$ as an equilibrium in the extension of the original game.

An instance of the {\sc Stable Network Design} problem (SND) consists of a network design game on a graph $G$, a budget $B$, and a positive number $K$. The question is whether there exists a subsidy assignment $b$ of cost at most $B$ on the edges of $G$ so that a subgraph of $G$ of total weight at most $K$ is an equilibrium for the extension of the original game with subsidies $b$. An instance of the {\sc Stable Network Enforcement} problem (SNE) consists of a network design game on a graph $G$, a budget $B$, and a state $T$. The question is whether there exists a subsidy assignment of cost at most $B$ on the edges of $G$ so that $b$ enforces $T$ as an equilibrium on the extension of the original game with subsidies $b$. Note that the subsidy assignment does not need to put any subsidies to edges not in $T$. In the integral versions of SNE and SND, the subsidy assignment in question is all-or-nothing. Of course, optimization versions of the above problems are natural. For example, in an optimization version of SNE, we are given the network design game on a graph $G$ and a state $T$, and we require the subsidy assignment in question to be of minimum cost.

Broadcast games are special cases of network design games. In a broadcast game, the graph $G$ has exactly $n+1$ nodes; all players have the same destination node, which is called the {\em root} and is denoted by $r$, and distinct non-root nodes as sources. In such games, we refer to a player with a source node $u$ as the player associated with node $u$ (and use $u$ to identify the player). Clearly, any state $T$ in such a game spans all nodes of $G$ and a minimum spanning tree is a state that minimizes the total cost experienced by the players. Given any spanning tree $T$ and a non-root node $u$, we denote by $T_u$ the path from $u$ to $r$ in $T$. In broadcast games, we mostly consider equilibria that are spanning trees. It can be easily seen that if an equilibrium $T$ contains a cycle, then all edges in this cycle must have zero weight; then, there is an alternative spanning tree with the same total weight that is also an equilibrium.

\section{The complexity of SNE and SND}\label{sec:complexity}
We begin the presentation of our results with the following observation.

\begin{theorem}\label{thm:SNE-P}
{\sc Stable Network Enforcement} is in P.
\end{theorem}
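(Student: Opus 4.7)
The plan is to cast SNE as a linear program. The decision variables are the subsidies $b_a$ for each edge $a$, with box constraints $0\le b_a\le w_a$. Since all subsidies are nonnegative, putting subsidies on edges outside $T$ only lowers the cost of potential deviations and can never help to stabilize $T$; I would therefore fix $b_a=0$ for $a\notin T$ and keep variables only for $a\in T$. The LP objective is to minimize $\sum_{a\in T} b_a$, after which we compare the optimum with $B$ to answer the decision question.

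The key observation that makes the formulation linear is that both sides of every equilibrium inequality are linear in $b$. At state $T$, player $i$'s cost $\co_i(T;b)=\sum_{a\in T_i}(w_a-b_a)/n_a(T)$ is linear in $b$. For any alternative path $T'_i$ from $s_i$ to $t_i$, the deviation cost equals $\sum_{a\in T'_i}(w_a-b_a)/(n_a(T_{-i})+1)$, which is again linear in $b$ because $n_a(T_{-i})$ depends only on the fixed strategies of the other players. The enforcement requirement is thus the family of inequalities $\co_i(T;b)\le \co_i(T_{-i},T'_i;b)$, one per player $i$ and per $s_i$-$t_i$ path $T'_i$.

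The LP has polynomially many variables but exponentially many constraints. To solve it by the ellipsoid method, I would provide a polynomial-time separation oracle: given a candidate $b$, for each player $i$ compute a best response by running a shortest-path algorithm in $G$ with the nonnegative edge weights $(w_a-b_a)/(n_a(T_{-i})+1)$; if the resulting cost is strictly less than $\co_i(T;b)$, the corresponding inequality is a violated constraint, otherwise every constraint for player $i$ is satisfied. Combined with the ellipsoid method, this yields a polynomial-time algorithm.

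To obtain a compact LP instead, I would replace the exponential family of deviation constraints for each player by a polynomial encoding based on shortest-path LP duality. For each player $i$, introduce node potentials $d^i_v$ subject to $d^i_v-d^i_u\le (w_{uv}-b_{uv})/(n_{uv}(T_{-i})+1)$ for every orientation of every edge $uv$, together with the single inequality $d^i_{t_i}-d^i_{s_i}\ge \co_i(T;b)$; existence of such potentials is equivalent to the minimum $s_i$-$t_i$ path cost being at least $\co_i(T;b)$. This gives an LP with $O(n(|V|+|E|))$ variables and constraints equivalent to the original formulation. I do not anticipate a serious obstacle: the whole argument hinges on the linearity of costs in $b$, after which the shortest-path separation or duality step is completely standard.
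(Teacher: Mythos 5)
Your proposal is correct and follows essentially the same route as the paper: the same LP over subsidies with the equilibrium constraints (your denominator $n_a(T_{-i})+1$ coincides with the paper's $n_a(T)+1-n^i_a(T)$), solved either by the ellipsoid method with a shortest-path separation oracle or by the same polynomial-size reformulation via shortest-path node potentials. No gaps to report.
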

This theorem applies to general instances of SNE. By slightly deviating from the main focus of the paper which is on broadcast games, we first discuss how general instances of SNE can be expressed using linear programming; analogous formulations have been proposed for the computation of tolls in non-atomic selfish routing games (e.g., see \cite{FJM04}). Then, specifically for broadcast SNE instances, we present a much simpler LP formulation.

We will describe a linear program which, given a network design game with a set $N$ of $n$ players on an edge-weighted graph $G=(V,E,w)$ and a state $T$, solves the optimization version of the problem by computing a subsidy assignment $b$ of minimum cost so that $T$ is an equilibrium of the extension of the original game on $G$ with subsidies $b$. The subsidies in question are the variables of the LP. We remark that, even though we only need to use variables for the subsidies on the edges of $T$, we assume that $b_a$ is defined for each edge $a$ of $E$ in order to simplify the presentation; it should be clear that, in any optimal solution of the linear programs below, $b_a=0$ for each edge $a\in E\setminus T$. The variables are constrained so that $b_a\in [0,w_a]$ while there are constraints that capture the requirement that $T$ is an equilibrium in the extension of the original game with subsidies $b$. For each player $i$, this means that the cost the player experiences in $T$ should not be higher than the cost she would experience when deviating to any other strategy $T'_i$ (i.e., a path that connects her source node $s_i$ to her destination $t_i$). This is captured by the inequality in the second line of LP (\ref{lp:1}), where ${\cal T}_i$ denotes the set of all paths connecting node $s_i$ to node $t_i$ in $G$. The left hand side of the inequality is simply the cost $\co_i(T;b)$ of player $i$ in $T$ while the right hand side is the cost $\co_i(T_{-i},T'_i;b)$ she would experience in state $T_{-i},T'_i$. Observe that the denominator $n_a(T)+1-n^i_a(T)$ equals the number of players using edge $a$ in $T_{-i},T'_i$. Also note that, given $T$ and $i$, the quantities $n_a(T)$ and $n_a^i(T)$ are fixed and, clearly, all constraints are linear.
\begin{eqnarray}\label{lp:1}
\mbox{minimize} & &\sum_{a\in E}{b_a}  \\\nonumber
\mbox{subject to} & & \forall i\in N, T'_i\in {\cal T}_i, \sum_{a\in T_i}{\frac{w_a-b_a}{n_a(T)}}\leq \sum_{a\in T'_i}{\frac{w_a-b_a}{n_a(T)+1-n^i_a(T)}}\\\nonumber
& & \forall a\in E, 0\leq b_a\leq w_a
\end{eqnarray}
In general, the above LP has an exponential number of constraints (one for each player $i$ and each path in ${\cal T}_i$) but can be solved in polynomial time using the ellipsoid method (see \cite{GLS93}). All that is needed is a {\em separation oracle} which returns a violating constraint (if one exists) for a given subsidy assignment $b$ in polynomial time. We demonstrate how this can be done for the constraints associated with player $i$. We construct an edge-weighted graph $H_i$ over the set of nodes $V$ (and set of edges $E$) so that the weight of edge $a$ is defined as $w'_a=\frac{w_a-b_a}{n_a(T)+1-n^i_a(T)}$. We compute a shortest path $p_i$ from node $s_i$ to node $t_i$ in $H_i$. If the length of path $p_i$ satisfies $\sum_{a\in p_i}{w'_a}<\co_i(T;b)$, then the constraint that is associated with path $p_i$ for player $i$ is violated. Otherwise, no constraint associated with player $i$ is violated.

We can transform the above LP to an equivalent one that has polynomial size. The main idea is to simulate the separation oracle for the above LP using additional variables and constraints. For each player $i$ and node $v$ of $G$, we introduce the variable $\pi_i(v)$ to denote a lower bound on the length of the shortest path in graph $H_i$ from node $s_i$ to node $v$. The first two lines in the constraints of the following LP guarantee that $\pi_i(v)$ is indeed such a lower bound (in the first constraint, $\Gamma(u)$ denotes the set of neighbors of node $u$ in $G$). Then, the constraint $\pi_i(t_i)\geq \co_i(T;b)$ guarantees that the player $i$ has no incentive to deviate from her strategy in $T$ in the extension of the original game with subsidies $b$.
\begin{eqnarray}\label{lp:2}
\mbox{minimize} & &\sum_{a\in E}{b_a}  \\\nonumber
\mbox{subject to} & & \forall i\in N, u\in V, v\in \Gamma(u), \pi_i(v)\leq \pi_i(u)+\frac{w_{(u,v)}-b_{(u,v)}}{n_{(u,v)}(T)+1-n^i_{(u,v)}(T)}\\\nonumber
& & \forall i\in N, \pi_i(s_i) = 0\\\nonumber
& & \forall i\in N, \pi_i(t_i)\geq \sum_{a\in T_i}{\frac{w_a-b_a}{n_a(T)}}\\\nonumber
& & \forall i\in N, u\in V\setminus \{s_i\}, \pi_i(u) \geq 0\\\nonumber
& & \forall a\in E, 0\leq b_a\leq w_a
\end{eqnarray}

LP (\ref{lp:2}) has $\Theta(n|V|)$ variables and $\Theta(n|E|)$ constraints. We have a much simpler LP when the input is an instance of SNE consisting of a broadcast game on graph $G$ (with a root node $r$) and a spanning tree $T$ of $G$. We use the same variables as in the original LP (i.e., $n$ variables since $T$ is a spanning tree over $n+1$ nodes now) and much fewer (i.e., $O(|E|)$) constraints. In particular, we just require that no player associated with a node $u$ has an incentive to change her strategy in $T$ and use an edge $(u,v)$ that does not belong to $T$ and the path from $v$ to $r$ in $T$. The corresponding LP is:
\begin{eqnarray}\label{lp-simple}
\mbox{minimize} & &\sum_{a\in E}{b_a}  \\\nonumber
\mbox{subject to} & & \forall u\in V \setminus \{r\}, v\in \Gamma(u) \mbox{ such that $(u,v)\not\in T$},\\\nonumber
 & & \quad\sum_{a\in T_u}{\frac{w_a-b_a}{n_a(T)}}\leq w_{(u,v)}+\sum_{a\in T_v}{\frac{w_a-b_a}{n_a(T)+1-n^u_a(T)}}\\\nonumber
& & \forall a\in E, 0\leq b_a\leq w_a
\end{eqnarray}
The correctness of the above LP (i.e., its equivalence with the optimization version of SNE) is given by the following lemma.
\begin{lemma}\label{lem:lp-simple}
Consider an instance of {\sc Stable Network Enforcement} consisting of a broadcast game on a graph $G$ and a state $T$. A subsidy assignment $b$ enforces $T$ as an equilibrium in the extension of the broadcast game in $G$ if and only if the constraints of LP (\ref{lp-simple}) are satisfied.
\end{lemma}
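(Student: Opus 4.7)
The ``only if'' direction is immediate: each LP constraint in (\ref{lp-simple}) is exactly the inequality asserting that the one-hop deviation ``player $u$ takes edge $(u,v)$ and then follows $T_v$'' does not decrease $u$'s cost, which must hold if $T$ is an equilibrium of the extension.

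For the ``if'' direction, assume the LP constraints hold, fix a player $u$, and let $P=u_0 u_1\cdots u_k$ be any simple path from $u_0=u$ to $u_k=r$ in $G$ that $u$ might deviate to. For each node $x$, define the potential
\[
\psi_u(x)=\sum_{a\in T_x}\frac{w_a-b_a}{n_a(T)+1-n^u_a(T)},
\]
representing $u$'s hypothetical cost if she switched to the tree path $T_x$. Note that $\psi_u(r)=0$, that $\psi_u(u)=\co_u(T;b)$ (since $n^u_a=1$ on $T_u$ makes the denominator collapse to $n_a(T)$), and that the deviation cost of $P$ for $u$ equals $\sum_{i=0}^{k-1}c_u((u_i,u_{i+1}))$, where $c_u(e)=(w_e-b_e)/(n_e(T)+1-n^u_e(T))$. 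The plan is to telescope: if I can show the per-edge inequality $\psi_u(u_i)-\psi_u(u_{i+1})\leq c_u((u_i,u_{i+1}))$ for every $i$, then summing yields $\co_u(T;b)=\psi_u(u_0)-\psi_u(u_k)\leq\sum_i c_u((u_i,u_{i+1}))$, which is exactly the statement that $u$ cannot profitably deviate to $P$.

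The per-edge inequality splits into two cases. If $(u_i,u_{i+1})\in T$, then $T_{u_i}$ and $T_{u_{i+1}}$ differ by just that single edge: depending on which endpoint is the parent in $T$, $\psi_u(u_i)-\psi_u(u_{i+1})$ equals $+c_u((u_i,u_{i+1}))$ or $-c_u((u_i,u_{i+1}))$, and in both cases the inequality holds trivially since $c_u\geq 0$. If $(u_i,u_{i+1})\notin T$, I first invoke the LP constraint (\ref{lp-simple}) for the player $u_i$ itself (not $u$) at the neighbor $u_{i+1}$, which reads precisely $\psi_{u_i}(u_i)-\psi_{u_i}(u_{i+1})\leq w_{(u_i,u_{i+1})}=c_u((u_i,u_{i+1}))$ (the last equality holds because $(u_i,u_{i+1})\notin T$ forces $n_{(u_i,u_{i+1})}(T)=0$ and $b=0$ off $T$), and then combine it with the monotonicity claim
\[
\psi_u(u_i)-\psi_u(u_{i+1})\leq\psi_{u_i}(u_i)-\psi_{u_i}(u_{i+1}).
\]

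The monotonicity claim is the main obstacle, and I would prove it as follows. Decompose $T_{u_i}\triangle T_{u_{i+1}}$ into two tree paths $A$ (from $u_i$ up to the LCA of $u_i,u_{i+1}$ in $T$) and $B$ (from $u_{i+1}$ up to that same LCA), so that $\psi_u(u_i)-\psi_u(u_{i+1})$ equals the $A$-sum minus the $B$-sum of the quantities $(w_a-b_a)/(n_a(T)+1-n^u_a(T))$, and analogously for $u_i$ in place of $u$. On every $a\in A$ we have $n^{u_i}_a=1$, so the denominator under $u_i$ is minimal ($=n_a(T)$); for any other player $u$ it is at least $n_a(T)$, which can only shrink $u$'s $A$-sum. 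Symmetrically, on every $a\in B$ we have $n^{u_i}_a=0$, giving $u_i$ the maximal denominator $n_a(T)+1$; for any other $u$ it is at most $n_a(T)+1$, which can only inflate $u$'s $B$-sum. Hence replacing $u_i$ by $u$ weakly shrinks the $A$-sum and weakly enlarges the $B$-sum, so their difference weakly decreases, which is exactly the claim. Chaining the LP inequality for $u_i$ with this monotonicity closes Case~B, completing the telescoping argument and the proof.
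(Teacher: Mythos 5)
Your proof is correct, but it assembles the key inequality differently from the paper. The paper argues by contradiction with an extremal choice: among all profitable deviations of player $u$ it fixes one of minimum cost and, among those, one with the fewest non-tree edges, looks at the \emph{last} non-tree edge $(v_1,v_2)$ on it, applies the LP constraint of player $v_1$ for that edge, and---via the same least-common-ancestor split into edge-disjoint subpaths and the same observation that $n^u_a(T)\in\{0,1\}$ lets one pass from $v_1$'s denominators to $u$'s---shows the deviation could be rerouted through the tree without increasing $u$'s cost, contradicting the minimality of the chosen deviation. You instead give a direct, constructive telescoping of the potential $\psi_u$ along an arbitrary deviation path, proving a per-edge inequality for every edge (tree edges trivially, non-tree edges via the constraint of the tail player combined with your monotonicity claim, which is exactly the paper's $q_1$/$q_2$ comparison). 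Your route avoids the extremal/contradiction bookkeeping, treats all non-tree edges of the deviation uniformly, and makes explicit the shortest-path/potential reading of the constraints (the same idea behind the $\pi_i(\cdot)$ variables in LP~(\ref{lp:2})); the paper's route only has to perform a single exchange, at the price of the careful choice of deviation. One shared caveat: both arguments charge a non-tree deviation edge its full weight $w_{(u,v)}$, i.e., they rely on the paper's standing convention that subsidies are placed only on edges of $T$; your explicit remark that $b$ vanishes off $T$ is precisely this convention, so it is not a gap.
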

\begin{proof}
If $T$ is an equilibrium of the extension of the broadcast game on $G$ with subsidies $b$, then clearly the constraints of LP (\ref{lp-simple}) are satisfied since otherwise a player associated with node $u$ would have an incentive to deviate and use the path that consists of the edge from $u$ to another node $v$ that is not part of $T$ and the path from $v$ to $r$ in $T$ (or, simply, of the direct edge from $u$ to the root node if $v=r$).

Now, assume that all constraints of LP (\ref{lp-simple}) are satisfied for a subsidy assignment $b$. We will show that no player has an incentive to deviate from $T$ in the extension of the broadcast game with subsidies $b$. Assume for the sake of contradiction that the player associated with node $u$ has an incentive to deviate to some path from $u$ to $r$ that includes edges not belonging to $T$. Among such paths, let $p$ be the path that incurs the minimum cost for the player associated with node $u$ and, furthermore, contains the minimum number of edges not belonging to $T$. Following the edges of path $p$ from node $u$ to node $r$, let $(v_1,v_2)$ be its last edge that does not belong to $T$. Using the constraint of the LP for the player associated with node $v_1$ and edge $(v_1,v_2)$, we have that
\begin{eqnarray*}
\sum_{a\in T_{v_1}}{\frac{w_a-b_a}{n_a(T)}} &\leq & w_{(v_1,v_2)}+\sum_{a\in T_{v_2}}{\frac{w_a-b_a}{n_a(T)+1-n_a^{v_1}(T)}}.
\end{eqnarray*}
Also, let $v_3$ be the least common ancestor of $v_1$ and $v_2$ in $T$ and denote by $q_1$ and $q_2$ the subpaths of $T_{v_1}$ and $T_{v_2}$ that connect $v_1$ and $v_2$ to $v_3$, respectively. Since $T_{v_1}$ and $T_{v_2}$ use the same edges in order to connect $v_3$ to $r$ and $q_1$ and $q_2$ are edge-disjoint (and, hence, $n_a^{v_1}(T)=0$ for each $a\in q_2$), we also have that
\begin{eqnarray}\label{eq:contradiction-1}
\sum_{a\in q_1}{\frac{w_a-b_a}{n_a(T)}} &\leq &w_{(v_1,v_2)}+\sum_{a\in q_2}{\frac{w_a-b_a}{n_a(T)+1}}.
\end{eqnarray}
Now, observe that the cost experienced by player $u$ when deviating to the path $p$ is strictly smaller than the cost she would experience by using the path $p'$ consisting of the subpath of $p$ connecting $u$ to $v_1$, the edges of $q_1$, and the edges of $p$ from $v_3$ to $r$. Otherwise, the path $p'$ would either incur strictly smaller cost to the player associated with node $u$ than $p$ or it would also incur the same cost as $p$ but it would have strictly fewer edges not belonging to $T$; both cases contradict our assumptions about $p$. Hence,
\begin{eqnarray*}
w_{(v_1,v_2)}+\sum_{a\in q_2}{\frac{w_a-b_a}{n_a(T)+1-n^u_a(T)}} &<& \sum_{a\in q_1}{\frac{w_a-b_a}{n_a(T)+1-n^u_a(T)}},
\end{eqnarray*}
which implies that
\begin{eqnarray}\label{eq:contradiction-2}
w_{(v_1,v_2)}+\sum_{a\in q_2}{\frac{w_a-b_a}{n_a(T)+1}} &<& \sum_{a\in q_1}{\frac{w_a-b_a}{n_a(T)}},
\end{eqnarray}
since $n_a^u(T)\in\{0,1\}$. We have reached a contradiction between (\ref{eq:contradiction-1}) and (\ref{eq:contradiction-2}) and the lemma follows.\qed\end{proof}

Next, we prove that the restriction of SND to broadcast instances is NP-hard. The hardness proof uses instances of SND with budget equal to zero with target equilibrium weight equal to the weight of the minimum spanning tree. Note that in instances with a unique minimum spanning tree, the problem is certainly in P; one can just compute a minimum spanning tree and apply the LP approach described above. In our reduction, there are many different minimum spanning trees but it is hard to detect whether there is one that is an equilibrium in the corresponding broadcast game.

\begin{theorem}\label{thm:broadcastNPC}
Given an instance of {\sc Stable Network Design} consisting of a broadcast game on a graph $G$, budget $B$, and a positive number $K$, it is NP-hard to decide whether there exists a subsidy assignment $b$ of cost at most $B$ so that the extension of the game with subsidies $b$ has a tree of weight at most $K$ as an equilibrium. Moreover, it is NP-hard even when $B$ is set to zero.
\end{theorem}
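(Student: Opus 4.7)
I would reduce from 3-SAT to the broadcast version of SND with $B=0$. Given a formula $\phi$ on $n$ variables $x_1,\ldots,x_n$ and $m$ clauses, I would construct a broadcast game on a graph $G$ with root $r$ and set $K$ equal to the minimum spanning tree weight of $G$. The construction will have the property that the MSTs of $G$ are in one-to-one correspondence with the $2^n$ truth assignments (all attaining weight exactly $K$), and that the MST corresponding to an assignment $\alpha$ is a Nash equilibrium of the induced broadcast game if and only if $\alpha$ satisfies $\phi$. Since the LP formulation of Lemma~\ref{lem:lp-simple} is feasible with zero cost exactly in that case, deciding the SND instance with $B=0$ solves 3-SAT.

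For each variable $x_i$ I would attach to $r$ a variable gadget consisting of two structurally identical ``branches'' (true and false) of equal total weight, with the remaining edges inside the gadget set heavy enough that any MST of $G$ must include exactly one branch entirely; the choice of branch then encodes the truth value of $x_i$. For each clause $C_j$ I would add a clause gadget linked to the three variable gadgets of its literals and to $r$ through carefully chosen edges. The edge weights inside the clause gadget would be calibrated so that: if some literal of $C_j$ is true, the players in the gadget are at equilibrium within the MST (every alternative path they could take is at least as expensive); but if all three literals of $C_j$ are false, then a specific player in the gadget strictly prefers to reroute via a non-tree edge into the would-be branch of a satisfying literal, which is the branch actually missing from the MST, producing a profitable deviation.

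The main obstacle is localizing deviations so that the Nash property factors over gadgets. I need to rule out deviations that exit one gadget and cross through another, and I need players inside variable gadgets to be at equilibrium regardless of which truth assignment is chosen. The standard way to enforce this is to scale the weights inside the clause gadgets to be small relative to the variable gadgets, so any inter-gadget detour is dominated by the cost of traversing several variable-gadget edges, while intra-gadget deviations remain delicate and track the satisfiability condition. Once the weights are tuned, one verifies (i) every MST of $G$ has weight exactly $K$, (ii) the map from MSTs to assignments is a bijection, and (iii) an MST is a Nash equilibrium if and only if the associated assignment satisfies $\phi$; this establishes NP-hardness of SND on broadcast instances, and in particular the ``moreover'' part with $B=0$.
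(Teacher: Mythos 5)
Your proposal is a template for a SAT-based reduction, but the essential content of such a reduction is missing: no gadget is actually constructed, no edge weights are given, and claims (i)--(iii) are deferred to ``once the weights are tuned.'' In a broadcast game with Shapley cost sharing this tuning is the whole difficulty, because a player's cost on an edge is $w_a/n_a(T)$ and therefore depends on exactly how many players are routed through it; any clause gadget you attach to a variable branch changes the shares seen by the variable players, and any variable-gadget choice changes the shares seen by the clause players, so ``localizing deviations by scaling clause weights small'' does not work as stated --- small weights do not decouple the player counts, and they also do not by themselves keep the set of minimum spanning trees in bijection with truth assignments once the clause gadgets are wired to the variable branches and to $r$. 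Moreover, the one concrete mechanism you do describe appears backwards: when a clause is unsatisfied you have its player deviate ``into the would-be branch of a satisfying literal, which is the branch actually missing from the MST.'' An unused branch is shared by nobody, so the deviator pays its full weight alone, and since the two branches of a variable gadget have equal weight this is not a profitable move without substantial extra structure. In these games profitable deviations arise when a player's own tree path is under-shared and a fixed-cost alternative (a bypass edge) exists, not by migrating onto edges no one uses.

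For contrast, the paper avoids all of this by reducing from {\sc Bin Packing} rather than 3-SAT. The {\tt Bypass} gadget has a basic path of $\ell$ unit edges from a connector $c$ to $r$ and a bypass edge of weight $\H_{\kappa+\ell}-\H_{\kappa}$, so the player at $c$ deviates exactly when fewer than $\kappa$ players share the basic path (Lemma \ref{lem:gadget}); attaching a star of $s_i-1$ zero-weight leaves per item and connecting star centers to connectors by equal-weight edges makes ``every connector collects exactly $C$ players'' both the equilibrium condition and the packing condition, so the harmonic structure of the cost shares is used directly as a counting device and no clause/variable calibration is needed. The paper does carry out a SAT-style gadget construction, but only for the all-or-nothing hardness (Theorem \ref{thm:hardness-all-or-nothing}), and the machinery it needs there --- variable labels, the doubly-exponential constants $n_j$, consistency gadgets, and auxiliary nodes added solely to pin down the player counts on each light edge --- is a fair indication of how much precision your ``calibrated'' clause gadgets would require. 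As it stands, your proposal has a genuine gap: the reduction's core construction and its verification are not supplied, and the deviation mechanism you sketch would need to be redesigned.
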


We will first describe a gadget that is used in the proof of Theorem \ref{thm:broadcastNPC}; we call it the {\tt Bypass} gadget of capacity $\kappa$. The gadget is shown in Figure~\ref{fig:single}. Let $\ell$ be the minimum positive integer such that $\H_{\kappa+\ell} - \H_{\kappa} > 1$. The {\tt Bypass} gadget consists of a root node $r$ connected to one end of a path of $\ell$ nodes formed with edges of unit weight. We call this the {\em basic path} of the {\tt Bypass} gadget. The node $c$ on the far end of the path from $r$ is called the {\em connector node}. There is an edge from $c$ to $r$ of weight $\H_{\kappa+\ell} - \H_{\kappa}$, which we call the {\em bypass edge}.

Suppose this gadget is connected to a subgraph $S$ of $\beta$ nodes as shown in Figure~\ref{fig:single}. For the moment, we are not concerned with how the nodes in $S$ are connected to each other. Consider the instance of SNE consisting of the broadcast game on the graph $G$ of Figure~\ref{fig:single}, budget $B=0$, and let $T$ be a minimum spanning tree of $G$. Note that $T$ does not include the bypass edge; it includes all edges in the basic path from $c$ to $r$ instead.

\begin{figure*}[htbp]
\centerline{\psfig{file=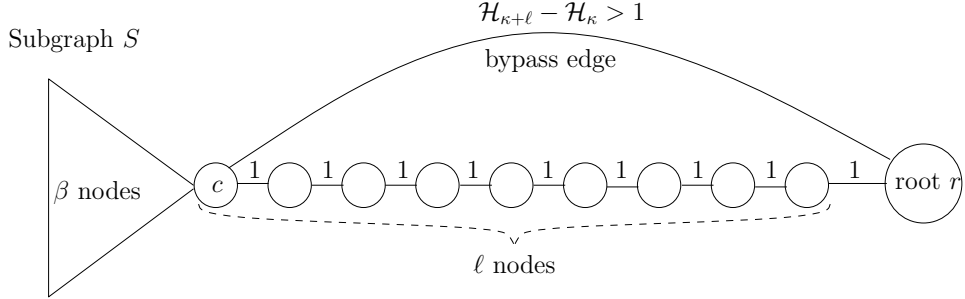,width=5in}}  
\caption{The {\tt Bypass} gadget with capacity $\kappa$.} \label{fig:single}
\end{figure*}

\begin{lemma}\label{lem:gadget}
If $\beta < \kappa$, then the player associated with node $c$ has an incentive to deviate from her strategy in $T$ and use the bypass edge. Otherwise, no player associated with a node in the basic path has any incentive to deviate from $T$.
\end{lemma}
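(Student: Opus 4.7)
The plan is to compute, as explicit harmonic sums, the cost that each basic-path player incurs in $T$ and the cost she would pay by rerouting through the bypass edge, and then to compare the two. Label the basic path as $r = v_0, v_1, \ldots, v_\ell = c$ and set $e_j = (v_{j-1},v_j)$. In $T$, player $v_i$ uses precisely the prefix $e_1, \ldots, e_i$ of the basic path, while each of the $\beta$ players in $S$ traverses all of $e_1,\ldots,e_\ell$ (since $S$ attaches to the gadget only through the connector $c$). Consequently $n_{e_j}(T) = (\ell - j + 1) + \beta$, and a telescoping sum gives
$$\co_{v_i}(T) \;=\; \sum_{j=1}^{i}\frac{1}{\ell - j + 1 + \beta} \;=\; \H_{\ell+\beta} - \H_{\ell-i+\beta};$$
in particular $\co_c(T) = \H_{\ell+\beta} - \H_\beta$. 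Introduce the auxiliary function $f(x) = \H_{\ell+x} - \H_x = \sum_{k=x+1}^{x+\ell} \tfrac{1}{k}$, which is strictly decreasing in $x$.

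For the first claim, if $c$ deviates to the bypass edge alone, she is its unique user and pays its full weight $\H_{\kappa+\ell}-\H_\kappa = f(\kappa)$. The assumption $\beta < \kappa$ together with the monotonicity of $f$ gives $\co_c(T) = f(\beta) > f(\kappa)$, so the deviation strictly benefits $c$. For the second claim, assume $\beta \geq \kappa$; the same comparison now gives $\co_c(T) = f(\beta) \leq f(\kappa)$, so $c$ does not profit from the bypass. For any other basic-path player $v_i$ with $i < \ell$, the relevant deviation is to climb through $e_{i+1},\ldots,e_\ell$ and then take the bypass edge; each of these path edges then carries one additional user and the bypass edge carries $v_i$ alone, yielding deviation cost $(\H_{\ell-i+1+\beta} - \H_{\beta+1}) + (\H_{\kappa+\ell}-\H_\kappa)$.

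Let $\Delta_i$ denote this deviation cost minus $\co_{v_i}(T)$. A straightforward manipulation of the harmonic sums yields the key identity
$$\Delta_i - \Delta_\ell \;=\; (\H_{\ell-i+1+\beta} - \H_{\beta+1}) + (\H_{\ell-i+\beta} - \H_\beta),$$
whose right-hand side is manifestly nonnegative for every $i \leq \ell$. Hence $\Delta_i \geq \Delta_\ell \geq 0$, and no basic-path player gains by deviating onto the bypass. I expect the main step to be precisely this monotonicity $\Delta_i \geq \Delta_\ell$: it captures the intuition that the pull toward the bypass is strongest for $c$, because any player higher on the basic path must first pay her share on several lower edges that are already crowded with users before she can even reach the bypass, which dilutes the attractiveness of the deviation. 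Once this monotonicity is in place, the lemma reduces to the single scalar comparison between $f(\beta)$ and $f(\kappa)$, which follows immediately from the defining choice of $\ell$.
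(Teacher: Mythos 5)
Your proposal is correct and follows essentially the same route as the paper: both hinge on the single comparison $\H_{\beta+\ell}-\H_{\beta}$ versus $\H_{\kappa+\ell}-\H_{\kappa}$ (your $f(\beta)$ vs.\ $f(\kappa)$), using the monotonicity of $x\mapsto \H_{x+\ell}-\H_x$ and the defining choice of $\ell$. Your explicit identity $\Delta_i-\Delta_\ell\ge 0$ is just a more detailed bookkeeping of the paper's terser observation that every other basic-path player pays at most $\H_{\beta+\ell}-\H_{\beta}$ in $T$ while any deviation pays at least the full bypass weight, so the content is the same.
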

\begin{proof}
Regardless of how the players associated with nodes in the subgraph $S$ are routed, since $S$ is connected to the {\tt Bypass} gadget through node $c$, there are $\beta+1$ players that need to use a path from $c$ to $r$. Let us focus on the player associated with node $c$. If she and all the $\beta$ players from $S$ take the basic path, then her cost will be $\sum_{i=1}^{\ell}\frac{1}{\beta + i} = \H_{\beta + \ell} - \H_{\beta}$. If $\beta < \kappa$, then $\H_{\kappa + \ell} - \H_{\kappa} < \H_{\beta + \ell} - \H_{\beta}$, and therefore, the player associated with node $c$ has an incentive to deviate to the bypass edge. On the other hand, if $\beta \ge \kappa$, then $\H_{\kappa + \ell} - \H_{\kappa} \ge \H_{\beta + \ell} - \H_{\beta}$ and any player associated with a node in the basic path experiences a cost of at most $\H_{\beta + \ell} - \H_{\beta}$ in $T$. Hence, no such player has an incentive to deviate from $T$.
\qed\end{proof}

\noindent {\bf Proof of Theorem \ref{thm:broadcastNPC}. ~}
We show that the problem is NP-hard even when we consider the special case where $B=0$ and $K$ equals the weight of the minimum spanning tree of the input graph $G$. In other words, given a broadcast game on a graph $G$ with root node $r$, we ask: does this game have a minimum spanning tree of $G$ as an equilibrium? We use a reduction from {\sc Bin Packing}.

We use a stricter form of {\sc Bin Packing} defined as follows. We are given a set of $n$ items indexed by $i \in \{1, 2, \ldots, n\}$. The size of each item $i$ is a positive even integer denoted by $s_i$. Since bin packing is strongly NP-hard \cite{GJ79}, we assume that $s_i$ is bounded by a polynomial in $n$. We are also given a set of $k$ bins indexed by $j\in \{1, 2, \ldots, k\}$, each of even integer capacity $C$, which we assume to be at least as large as $\max_{i}{s_i}$. We furthermore assume that $\sum_{i=1}^n s_i = k C$. We ask whether each item can be allocated to one of the $k$ bins so that the total size of items in each bin is exactly $C$. Our definition of {\sc Bin Packing} is somewhat stricter than the conventional definition in which the capacity of bins and the size of the items is not restricted to be even and bins are not required to be filled to the brim. However, we note that it is quite straightforward to see that this restricted version of the problem can be reduced from the conventional version by first adding a suitable number of unit-sized items and then doubling the size of all items and the capacity of all bins. The number of additional items is upper-bounded by the total capacity of the bins. Therefore, our restriction of {\sc Bin Packing} is also strongly NP-hard.

\begin{figure*}[htbp]
\centerline{\psfig{file=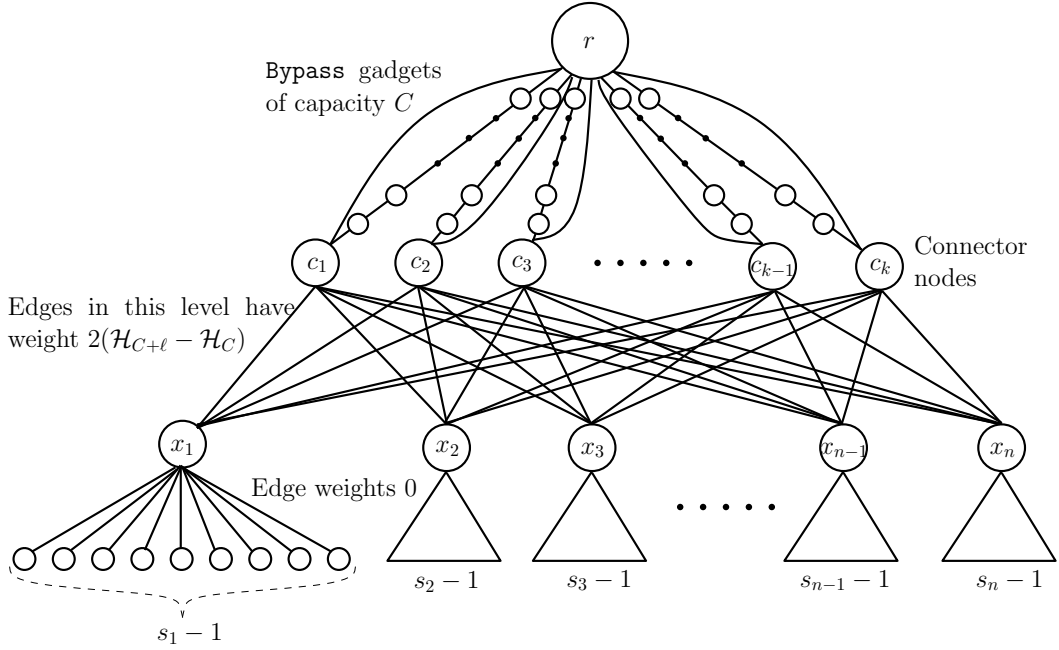,width=5.5in}}  
\caption{The graph $G$ constructed from an instance of {\sc Bin Packing}.} \label{fig:reduction}
\end{figure*}

Given a restricted instance of {\sc Bin Packing}, we now construct an instance of SNE as follows. For each item $i$ of size $s_i$, we create a star graph with one center node which we denote by $x_i$ and $s_i-1$ leaves. The edges connecting the leaves to the center node of the star have zero weight. Let $X$ be the set of center nodes. For each bin $j$, construct a {\tt Bypass} gadget with capacity $\kappa = C$. Again, let $\ell$ be the number of unit-weight edges in the basic path of each {\tt Bypass} gadget. Recall that $\ell$ is the minimum positive integer such that $\H_{C+\ell}-\H_C>1$; this implies that $\ell$ is linear in $C$. We denote the connector node in the gadget corresponding to bin $j$ by  $c_j$.  Let $\chi$ be the set of all connector nodes. We connect sets $\chi$ and $X$ by a complete bipartite edge set with edges having weight $2(\H_{C+\ell}-\H_C)$. Observe that any minimum spanning tree of $G$ consists of the $k\ell$ unit-weight edges in the {\tt Bypass} gadgets, the zero-weight edges connecting the leaves to their star center, and $n$ edges that connect nodes of $\chi$ to nodes of $X$ so that each node of $X$ is connected to exactly one node of $\chi$. We set $K$ to be the weight of the minimum spanning tree, i.e., $K = k \ell + 2n (\H_{C+\ell}-\H_C)$.

We claim that a minimum spanning tree $T_{ne}$ of $G$ is an equilibrium for the broadcast game on $G$ if and only if the {\sc Bin Packing} instance has a solution. We prove this claim in both directions.

Let $T_{ne}$ be a minimum spanning tree that is an equilibrium. Let $\beta_j +1$ be the number of nodes in the subtree of $T_{ne}$ rooted at $c_j$. From Lemma~\ref{lem:gadget}, we know that since $T_{ne}$ is an equilibrium, for all $j$, it holds that $\beta_j \ge C$. However, we also know from the properties of the {\sc Bin Packing} instance and the construction of graph $G$ that $\sum_{j=1}^k\beta_j=\sum_{i=1}^n s_i = k C$. Clearly, it follows that for all $j$, we have $\beta_j = C$. Therefore, the allocation of item $i$ to bin $j$ whenever $x_i$ is connected to $c_j$ will lead to a solution for the {\sc Bin Packing} instance since the total size of these items is exactly $\beta_j=C$.

To show the other direction, let us suppose that we have a solution to the {\sc Bin Packing} instance. We construct a minimum spanning tree $T_{ne}$ as follows. $T_{ne}$ contains the edges from the leaves to the corresponding star center, the basic paths from the connector nodes to the root node $r$, and the edge $(x_i, c_j)$ for each item $i$ that is allocated to bin $j$. Note that, for $j=1, ..., k$, the number of nodes in the subtree of $T_{ne}$ rooted at $c_j$ is exactly $C$. So, any player associated with a node in a basic path experiences a cost of at most $\H_{C+\ell}-\H_C$ in $T_{ne}$. Furthermore, observe that each edge of $T_{ne}$ between nodes of $\chi$ and $X$ is used by at least two players in $T_{ne}$. So, any player associated with a node in a star experiences a cost of at least $2(\H_{C+\ell}-\H_C)$. Hence, no player has an incentive to deviate to a path that includes a node of $\chi$ that she does not use in $T_{ne}$. Any such path would include an edge of weight $2(\H_{C+\ell}-\H_C)$ between a node in $\chi$ and a node in $X$ that is used only by that player. So, for each $j$, the $C$ players associated with nodes in the subtree of $c_j$ in $T_{ne}$ have no incentive to deviate to a path that does not use node $c_j$. By Lemma \ref{lem:gadget}, player $c_j$ (and, consequently, all players that have node $c_j$ in their path to $r$ in $T_{ne}$) has no incentive to deviate to the bypass edge connecting $c_j$ to $r$. This holds for any other node in the basic path of a {\tt Bypass} gadget as well. Therefore, it follows that $T_{ne}$ is an equilibrium.  \qed \medskip

Note that the proof of Theorem \ref{thm:broadcastNPC} essentially implies that deciding whether the price of stability of a given broadcast game is $1$ or not is NP-hard. The next statement provides an even stronger negative result. It implies that given instances of SND consisting of a broadcast game on a graph $G$ and a budget $B$, it is NP-hard to approximate within a factor better than $571/570$ the minimum weight among all equilibria in any extension of the original game with subsidies of cost at most $B$.

\begin{theorem}\label{thm:pos}
Approximating the price of stability of a broadcast game within a factor better than $571/570$ is NP-hard.
\end{theorem}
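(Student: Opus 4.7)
The plan is to turn the NP-hardness reduction of Theorem~\ref{thm:broadcastNPC} into a gap-preserving one. Whereas that reduction only separated PoS $=1$ from PoS $>1$, here one must separate PoS $=1$ from PoS $\geq 571/570$. The starting point is therefore a gap version of the partitioning problem used there---either a suitably restricted gap variant of {\sc Bin Packing} or {\sc 3-Partition}---in which it is NP-hard to distinguish instances admitting a perfectly filled packing from those in which a prescribed constant fraction of bins must remain under-filled. Such bounded strongly-NP-hard variants give the combinatorial engine; the specific constant $571/570$ will emerge by tuning the gadget parameters $C$ and $\ell$ against the source gap.

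The graph construction reuses the template of Theorem~\ref{thm:broadcastNPC}: one {\tt Bypass} gadget of capacity $C$ per bin, one zero-weight star of $s_i-1$ leaves per item $i$, and a complete bipartite edge set of weight $2(\H_{C+\ell}-\H_C)$ between connector nodes in $\chi$ and star centres in $X$. By Lemma~\ref{lem:gadget}, a minimum spanning tree, of weight $K = k\ell + 2n(\H_{C+\ell}-\H_C)$, is an equilibrium if and only if every bin is exactly full. On ``yes'' instances this immediately gives PoS $=1$ with cost $K$. The crux is to lower-bound the weight of every equilibrium on ``no'' instances.

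I would argue that in a ``no'' instance any equilibrium is forced to either (i) route at least one player across a bypass edge, which establishes an extra edge of weight $\H_{C+\ell}-\H_C$ not belonging to any MST, or (ii) establish a non-tree bipartite edge of weight $2(\H_{C+\ell}-\H_C)$ in order to relocate the star of an overflowing bin, while still paying for all basic-path edges that continue to serve the remaining players in each gadget. A counting argument driven by the gap of the source problem shows that the number of such ``extra'' edges that must appear grows linearly with the number of misplaced items, so their combined weight is at least a fixed $1/570$ fraction of $K$ after $C$, $\ell$, and the source gap are calibrated. Comparing with $K$ then yields PoS $\geq 571/570$ in the ``no'' case, which is the desired inapproximability.

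The main obstacle will be the equilibrium analysis in the ``no'' case. An equilibrium need not resemble an MST at all: players can cascade their routes through chains of connectors, share bypass edges in non-obvious patterns, or leave some stars attached by a single unshared bipartite edge. One must systematically rule out structurally exotic stable configurations in which the extra weight falls below the $1/570$ threshold---this requires showing that any such reshuffling either fails the stability inequalities (contradicting the LP characterization in Lemma~\ref{lem:lp-simple}) or necessarily establishes at least as many ``extra'' edges as the direct bypass/bipartite case. Together with a careful choice of $C$, $\ell$, and the source-problem gap so that the ratio lands exactly on $571/570$, this completes the reduction.
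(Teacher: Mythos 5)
There is a genuine gap here, and it is not just the unresolved equilibrium analysis you flag yourself at the end. The quantitative calibration you are counting on cannot work with the {\tt Bypass}-gadget template. In that construction the extra weight an equilibrium is forced to pick up per ``violated'' bin is $O(1)$: a bypass edge weighs $\H_{C+\ell}-\H_C$ (just over $1$) and a bipartite edge weighs $2(\H_{C+\ell}-\H_C)$ (just over $2$), while the basic paths of all gadgets remain established because the players sitting on them still need to reach $r$. Meanwhile the minimum spanning tree weight is $K=k\ell+2n(\H_{C+\ell}-\H_C)$ with $\ell=\Theta(C)$ (indeed $\ell\approx(e-1)C$), so $K=\Theta(kC)$. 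Even if a gap version of {\sc Bin Packing} or {\sc 3-Partition} forced a constant fraction of the $k$ bins to be under-filled, the resulting multiplicative gap on the equilibrium weight is only $1+O(1/C)$, which vanishes as $C$ grows. To keep it bounded away from $1$ you would need $C$ to be an absolute constant, but bin packing with constant capacity and integer item sizes is solvable in polynomial time, so the source of hardness evaporates exactly when the gap becomes usable. On top of this, your argument in the ``no'' case only asserts that extra edges must appear; you never lower-bound the weight of \emph{every} equilibrium, and you acknowledge that exotic stable configurations are not ruled out. So the proposal is a plan whose central step is missing and whose parameter tuning faces a structural obstruction.

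The paper takes an entirely different route: a reduction from {\sc Independent Set} in $3$-regular graphs using the Berman--Karpinski inapproximability result. The constructed broadcast game has only two edge weights ($1$ for root edges, $\frac{2+\delta}{3}$ for node--edge incidences of $H$), and the key lemma is a complete structural characterization of \emph{all} equilibria: every equilibrium consists solely of ``type A'' branches (a single unit edge to the root) and ``type B'' branches (a node of $U$ joined to the root and to its three incident nodes of $V$), so equilibria of weight $5n/2-(1-\delta)m$ correspond exactly to independent sets of size $m$. Because the weights are constants independent of the instance size, the constant additive gap of Berman--Karpinski ($(140-\epsilon)k$ vs.\ $(139+\epsilon)k$ on $284k$ nodes) translates directly into the multiplicative bound $571/570$. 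If you want to salvage your direction, you would need gadgets in which the penalty for a violated constraint scales linearly with the gadget's own weight, together with a full classification of equilibria rather than the one-tree test provided by Lemma~\ref{lem:gadget}; that is precisely what the paper's independent-set construction achieves.
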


\begin{proof}
The proof is based on a reduction from {\sc Independent Set} in $3$-regular graphs and uses an inapproximability result due to Berman and Karpinski \cite{BK99}. Given a $3$-regular graph $H$ with $n$ nodes and $3n/2$ edges, we construct an instance of a broadcast game consisting of a graph $G$ as follows. The graph $G$ has a node for each node and each edge of $H$ and an additional root-node $r$. We denote by $U$ the set of nodes of $G$ that correspond to a node of $H$ and by $V$ the set of nodes of $G$ that correspond to an edge of $H$. For each non-root node of $G$, there is an edge connecting it with the root; these edges have unit weight. A node of $V$ that corresponds to an edge $(u,v)$ in $H$ is connected with edges to the nodes of $U$ that correspond to the nodes $u$ and $v$ of $H$. The weight of these edges is $\frac{2+\delta}{3}$ for some $\delta \in (0,1/12]$. Clearly, the subgraph of $G$ induced by the nodes in $U\cup V$ (i.e., all nodes besides $r$) is bipartite; in this subgraph, the nodes of $U$ have degree $3$ while the nodes of $V$ have degree $2$.

We claim that the graph $H$ has an independent set of size $m$ if and only if the broadcast game has an equilibrium of weight $5n/2-(1-\delta)m$. Consider a spanning tree $T$ of $G$ and let $F$ be the forest obtained by removing the edges of $T$ that are adjacent to $r$. We call a {\em branch} of $T$ any subgraph consisting of a connected component of $F$, the edge connecting a node of this connected component to $r$ in $T$, and $r$ itself.

For any spanning tree of $G$, each of its branches can belong to one of the following types (see Figure \ref{fig:branches}):
\begin{itemize}
\item Type A: It consists of a single edge connecting the root to a node in $U\cup V$ (see Figure \ref{fig:branches}a).
\item Type B: It consists of an edge connecting the root to a node in $U$ which in turn is connected with its three adjacent nodes of $V$ (see Figure \ref{fig:branches}b).
\item Type C: It consists of an edge connecting the root to a node in $U\cup V$ which is connected to either one or two of its adjacent nodes in $G$ (see Figure \ref{fig:branches}c).
\item Type D: It is a tree of depth exactly $3$ rooted at $r$  (see Figures \ref{fig:branches}d and \ref{fig:branches}e).
\item Type E: It is a tree of depth at least $4$ rooted at $r$ (see Figures \ref{fig:branches}f and \ref{fig:branches}g).
\end{itemize}
\begin{figure}[htbp]
\centerline{\psfig{file=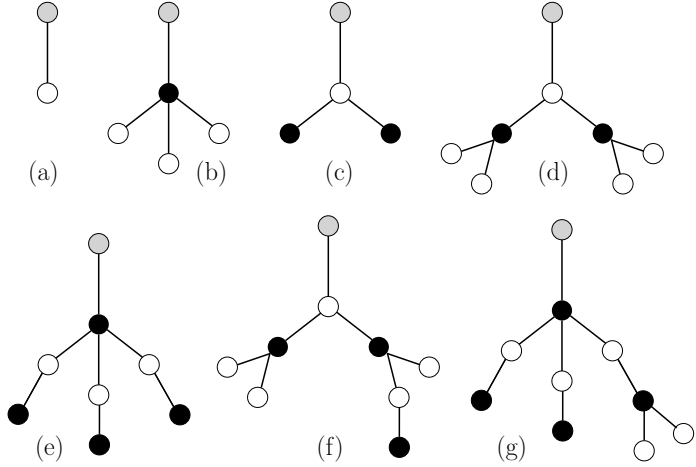,width=3.6in}}  
\caption{Examples with branches considered in the proof of Theorem \ref{thm:pos}. Black and white nodes denote nodes of $U$ and $V$, respectively while the grey nodes represent the root $r$. (a) A branch of type A. (b) A branch of type B. (c) A branch of type C. (d) and (e) Branches of type D. (f) and (g) Branches of type E.} \label{fig:branches}
\end{figure}

We will first prove that if $T$ is an equilibrium for the broadcast game in $G$, then it has a very special structure. In particular, none of its branches rooted at $r$ can be of type $C$, $D$, or $E$. Assume otherwise and let $h$ be such a branch:
\begin{itemize}
\item If $h$ is of type C, consider a leaf $u$ of $h$. The first edge in the path from $u$ to $r$ in $h$ (i.e., the one adjacent to $u$) is not used by any other player besides the one associated with node $u$ while the second edge of the path is used by at most $3$ players (i.e., the players associated with the leaves and the player associated with the node of $h$ which is connected with $r$). Thus, the cost player $u$ experiences is at least $\frac{2+\delta}{3}+1/3>1$ and, hence, this player has an incentive to change her strategy and use the direct edge from $u$ to $r$. See Figure \ref{fig:branches}c for an example.
\item If $h$ is of type D, then it has at most $7$ non-root nodes. Consider a leaf $u$ that is at distance $3$ from $r$. If $u$ belongs to $U$, then its adjacent node in $h$ has degree $2$. Thus, the first edge of the path from $u$ to $r$ in $h$ is not used by any other player besides the one associated with $u$ while the second edge in the path is used by at most $2$ players. In total, the cost the player associated with node $u$ experiences in these two edges of the path is $\frac{2+\delta}{3}+\frac{2+\delta}{6}>1$ and, hence, this player has an incentive to change her strategy and use the direct edge from $u$ to $r$ (see Figure \ref{fig:branches}d). If $u$ belongs to $V$, its adjacent node in $h$ belongs to $U$ and the next node in the path from $u$ to $r$ belongs to $V$. Thus, the first edge of the path from $u$ to $r$ in $h$ is not used by any other player besides the one associated with $u$, the second edge in the path is used by at most $3$ players, and the third edge in the path (the one adjacent to $r$) is used by at most $7$ players. In total, the cost the player associated with node $u$ experiences in these three edges of the path is $\frac{2+\delta}{3}+\frac{2+\delta}{9}+1/7>1$ and, hence, this player has an incentive to change her strategy and use the direct edge from $u$ to $r$ (see Figure \ref{fig:branches}e).
\item If $h$ is of type E, consider a leaf $u$ that is at maximum distance (i.e., at least $4$) from $r$. If $u$ belongs to $U$, then its adjacent node in $h$ has degree $2$. Thus, the first edge of the path from $u$ to $r$ in $h$ is not used by any other player besides the one associated with $u$ while the second edge in the path is used by at most $2$ players. In total, the cost the player associated with node $u$ experiences in these two edges of the path is $\frac{2+\delta}{3}+\frac{2+\delta}{6}>1$ and, hence, this player has an incentive to change her strategy and use the direct edge from $u$ to $r$ (see Figure \ref{fig:branches}f). If $u$ belongs to $V$, its next two nodes in the path from $u$ to $r$ in $h$ belong to $U$ and $V$, respectively. Thus, the first edge of the path from $u$ to $r$ in $h$ is not used by any other player besides the one associated with $u$, the second edge in the path is used by at most $3$ players, and the third edge in the path is used by at most $4$ players. In total, the cost the player associated with node $u$ experiences in these three edges of the path is at least $\frac{2+\delta}{3}+\frac{2+\delta}{9}+\frac{2+\delta}{12}>1$ and, hence, this player has an incentive to change her strategy and use the direct edge from $u$ to $r$ as well (see Figure \ref{fig:branches}g).
\end{itemize}

Instead, if $T$ consists only of branches of types A and B, no player has an incentive to deviate. Indeed, assume that a player associated with a node $u$ in a branch $h_1$ has an incentive to change her strategy and use a new path. Note that the cost she experiences on the edges of $h_1$ she uses is at most $1$. Clearly, her new path cannot include an edge incident to the root which does not belong to any branch since the cost experienced in such a path would be at least $1$. So, assume that the new path of the player contains the edges of another branch $h_2$ in order to connect $u$ to $r$. Clearly, this path should contain an edge of $G$ that is not contained in any branches (and, hence, it is not used by any player besides the one associated with node $u$) while the first edge of branch $h_2$ that the path contains is used by exactly one player besides the one associated with node $u$; this follows by the structure of $G$ and by the fact that branches of $T$ are of type A or B (and, hence, the new path enters branch $h_2$ through one of its leaves). Thus, the cost the player experiences in the new path is at least $\frac{2+\delta}{3}+\frac{2+\delta}{6}>1$ and, hence, she has no incentive to deviate.

Now, consider a spanning tree $T$ that is an equilibrium for the broadcast game in $G$ and let $m$ be the number of branches of type B it contains. Clearly, the weight of the edges in such a branch is $3+\delta$ while the total weight of the edges in branches of type A equals the number of nodes in $U\cup V$ which do not belong to branches of type B, i.e., $5n/2-4m$. Therefore, the total weight of the edges of $T$ is $5n/2-(1-\delta)m$. Let $I$ be the set of nodes of $H$ which correspond to the nodes of the branches of type B that are connected to $r$ in $T$. Due to the structure of $G$ and $T$, $I$ is an independent set of $H$ with size $m$. Also, consider any independent set in $H$ with size $m$. We can conversely construct a spanning tree of $G$ which consists of branches of type A and B and, hence, is an equilibrium: for each node of $U$ corresponding to a node in $I$, we create a branch of type B by connecting this node to $r$ and to its three adjacent nodes in $V$. In this way, we create $m$ branches of type B. Also, we create $5n/2-4m$ branches of type A by connecting each node of $U\cup V$ that does not participate in branches of type B to the root through their direct edges. The cost of this equilibrium tree is again $5n/2-(1-\delta)m$.

Now, we use the inapproximability result due to Berman and Karpinski \cite{BK99}. Their result can be thought of as a polynomial-time reduction from the decision version of {\sc Satisfiability}. The reduction uses a constant $\epsilon\in (0,1/2)$. Given an instance $\phi$ of {\sc Satisfiability}, they construct an instance of {\sc Independent Set} which consists of a $3$-regular graph $H$ with $284k$ nodes (for some parameter $k$) such that
\begin{itemize}
\item $H$ has an independent set of size at least $(140-\epsilon)k$ if $\phi$ is satisfiable, and
\item $H$ has no independent set of size more than $(139+\epsilon)k$ if $\phi$ is not satisfiable.
\end{itemize}

Using the particular graphs as input to our reduction, we can view it as a reduction from {\sc Satisfiability} as well. Given an instance $\phi$ of {\sc Satisfiability}, our reduction defines a broadcast game such that
\begin{itemize}
\item there exists an equilibrium of total weight at most $570+140\delta+(1-\delta)\epsilon$ if $\phi$ is satisfiable, and
\item there exists no equilibrium of total weight less than $571+139\delta-(1-\delta)\epsilon$ if $\phi$ is not satisfiable.
\end{itemize}
By selecting $\epsilon$ and $\delta$ to be arbitrarily small, we conclude that approximating the minimum total weight among all equilibria (and, hence, the price of stability) within a factor better than $571/570$ is NP-hard.
\qed
\end{proof}

\section{Bounds on the amount of subsidies}\label{sec:bounds}
In this section, we provide tight bounds on the amount of subsidies sufficient in order to enforce a minimum spanning tree as an equilibrium in the extension of the original broadcast game. The result is expressed as a constant fraction of the weight of the minimum spanning tree. We first prove our upper bound which is more involved. The proof uses two key ideas: first, the input SNE instance is appropriately decomposed into subinstances of SNE which have a significantly simpler structure. Our decomposition is such that the desired bound has to be  proved for the subinstances; in order to do so, we use a second idea and exploit a virtual cost function that upper-bounds the actual cost experienced by the players in the extension of the game (in the subinstances) with subsidies. The main property of this virtual cost function that simplifies the analysis considerably is that the total amount of subsidies necessary depends only on the weight of the tree (and not on its structure).

\begin{theorem}\label{thm:upper-bound}
Given an instance of {\sc Stable Network Enforcement} consisting of a broadcast game on a graph $G$ and a minimum spanning tree $T$ of $G$, there is a subsidy assignment $b$ of cost at most $\wgt(T)/e$ that enforces $T$ as an equilibrium of the extension of the game with subsidies $b$, where $e$ is the basis of the natural logarithm.
\end{theorem}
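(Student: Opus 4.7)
The plan is to show that the simple LP (\ref{lp-simple}) is always feasible with objective value at most $\wgt(T)/e$ by exhibiting a subsidy assignment built from a principled decomposition of $T$. First I would root $T$ at $r$ and invoke the cycle property of the MST: for any non-tree edge $(u,v)$, the weight $w_{(u,v)}$ is at least the maximum weight of any edge on the tree path from $u$ to $v$. This immediately limits which constraints of LP (\ref{lp-simple}) can be binding, and in particular tells me that only the portion of $T_u$ and $T_v$ that lies below $\text{LCA}(u,v)$ matters for the deviation going through $(u,v)$.

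Next I would partition the edges of $T$ into subinstances so that each edge is assigned to exactly one subinstance and every LP (\ref{lp-simple}) constraint can be ``charged'' to a single subinstance. A natural candidate is to group deviation constraints by the LCA node $\ell = \text{LCA}(u,v)$ in $T$: at each branching node $\ell$, the constraints cutting across $\ell$ involve only the two root-to-leaf chains in $T$ that connect $u$ and $v$ to $\ell$. Bundling a node $\ell$ with the edges on its path to the root of its heavy subtree (a heavy-path style split) produces subinstances that are themselves chains of edges of $T$, each carrying a simple monotone sequence of multiplicities $n_a(T)$. The point of the decomposition is that once I show, for each chain, that a subsidy of at most $1/e$ of that chain's total weight is enough to block every relevant deviation, summing gives the $\wgt(T)/e$ bound globally.

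To handle a single chain I would introduce a virtual cost function as the paper's overview suggests: replace each post-subsidy share $(w_a - b_a)/n_a(T)$ appearing in LP (\ref{lp-simple}) by a slightly larger, structurally simpler surrogate $\phi(w_a, b_a, n_a(T))$ that dominates it pointwise and, crucially, is separable in the sense that the subsidy needed on a chain depends only on the edge weights and multiplicities inside the chain (not on the rest of $T$). Feasibility under the virtual cost then implies feasibility under the actual cost. Within one chain, the virtual-cost constraints reduce to a one-dimensional LP: a greedy, root-to-leaf assignment that subsidizes each edge by exactly what the tightest local constraint demands gives an explicit formula for the total subsidy as a fraction of the chain weight. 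Maximizing this fraction over all chain lengths and weight profiles is a continuous optimization whose optimum is $1/e$; this is where the theorem's constant enters, mirroring the limit $(1-1/n)^n \to 1/e$.

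The hardest part, I expect, is designing the virtual cost function so that both directions work cleanly: it has to dominate the true post-subsidy cost for every deviation the LCA-decomposition charges to that chain (including deviations whose $v$ endpoint lies arbitrarily deep in a sibling subtree, where the multiplicity $n_a(T)+1-n_a^u(T)$ may differ from $n_a(T)$ in subtle ways), and at the same time it must be loose enough for the resulting per-chain optimization to saturate exactly at $1/e$ rather than something worse. Getting the correct trade-off probably requires tailoring $\phi$ to a specific monotonicity structure of the multiplicities along the chain, and verifying (via the cycle property) that the deviations not captured by the chain analysis are automatically dominated by the subsidy already assigned. Once $\phi$ is fixed, the decomposition, the greedy per-chain subsidy, and the $1/e$ optimization combine in a routine manner to yield the claimed bound.
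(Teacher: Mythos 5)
Your high-level slogan (decompose, then use a virtual cost so that the per-piece subsidy is a $1/e$ fraction of the piece's weight) matches the paper's overview, but the decomposition you choose does not do the work, and the two places you flag as ``the hardest part'' are exactly where the argument breaks. First, an LCA/heavy-path split of the edges of $T$ does not localize the constraints of LP~(\ref{lp-simple}): a deviation of the player at $u$ through a non-tree edge $(u,v)$ compares the portion of $T_u$ below the LCA with the portion of $T_v$ below the LCA, and these lie in \emph{different} chains (the $v$-side piece typically crosses several heavy paths); moreover the multiplicities $n_a(T)$ along a chain are determined by players hanging off other subtrees, so your ``subinstances'' are not self-contained games and nothing guarantees that subsidies computed chain by chain jointly satisfy constraints spanning two chains. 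Second, the claim that the per-chain optimization over ``all chain lengths and weight profiles'' saturates at $1/e$ is asserted, not proved, and for non-uniform weights it is not a clean one-dimensional problem: the trade-off between subsidizing a heavy edge with small multiplicity and many light edges with large multiplicities, together with the fact that the cycle property only bounds each non-tree edge by the maximum weight on its own fundamental cycle, does not reduce to the $(1-1/n)^n$ limit you invoke.

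The paper sidesteps both difficulties with a different decomposition: it splits the edge \emph{weights} into levels, producing copies $G^1,\dots,G^k$ of the whole graph in which every edge has weight $0$ or $c_j$, with $T$ a minimum spanning tree of every copy. Within one level it assigns to a heavy edge $a$ carrying subsidy $y_a$ the virtual cost $c_j\ln\frac{m_a}{m_a-1+y_a/c_j}$ (where $m_a$ counts heavy players on $a$), which dominates the true share $\frac{c_j-y_a}{n_a(T^j)}$; packing subsidies on the least crowded heavy edges of each leaf path so that every path has virtual cost exactly $c_j$ blocks all deviations, and a tree-flattening transformation shows the total subsidy at that level is exactly (number of heavy edges)$\cdot c_j/e$, i.e.\ $\wgt(T^j)/e$ \emph{independently of the tree's structure} --- this is what replaces your unproven per-chain optimization. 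Finally, because every level is a game on the same graph with the same players and the same tree, the per-level subsidies can simply be added edge-wise and the per-level Nash inequalities summed to certify that $T$ is an equilibrium of the original game; your chain decomposition has no analogous composition step. To salvage your route you would need both a proof of the $1/e$ bound for a single chain with arbitrary weights and multiplicities, and a mechanism for charging cross-chain deviations, neither of which is sketched.
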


\begin{proof}
We decompose the graph $G$ into copies $G^1$, $G^2$, ..., $G^k$ so that the following properties hold:
\begin{itemize}
\item $G^j$ has the same set of nodes and set of edges with $G$.
\item The edge weights in $G^j$ belong to $\{0,c_j\}$ for some $c_j>0$.
\item If the weight of an edge $a$ in $G^j$ is non-zero, then the weight of $a$ is non-zero in each of the copies $G^1, ..., G^{j-1}$ of $G$.
\item The weight of each edge in $G$ is equal to the sum of its weights in the copies of $G$.
\end{itemize}
The decomposition proceeds as follows. Let $c_1$ be the minimum non-zero weight among the edges of $G$. We construct a copy $G^1$ of $G$ (i.e., with the same set of nodes and set of edges) and with edge weights equal to zero if the corresponding edge of $G$ has zero weight and equal to $c_1$ otherwise. Then, we decrease each non-zero edge weight by $c_1$ in $G$ and proceed in the same way with the definition of the edge weights in the copy $G^2$, and so on. We denote by $k$ the number of copies of $G$ that have some edge of non-zero weight. Note that $c_k$ may be infinite if $G$ contains edges of infinite weight, but $k$ is upper bounded by the number of edges in $G$. Clearly, the weight of an edge in the original graph is the sum of its weights in the copies of $G$.

We denote by $T^j$ the spanning tree of $G^j$ that has the same set of edges with $T$. We first observe that $T^j$ is a minimum spanning tree of $G^j$. Assume that this is not the case; then, there must be an edge $a_1$ with zero weight in $G^j$ that does not belong to $T^j$ such that some edge $a_2$ of the edges of $T^j$ with which $a_1$ forms a cycle has non-zero weight $c_j$. By the definition of our decomposition phase, this implies that $a_2$ has higher weight than $a_1$ in $G$. This means that we could remove $a_2$ from $T$ and include $a_1$ in order to obtain a spanning tree with strictly smaller weight, i.e., $T$ would not be a minimum spanning tree.

Now, in order to compute the desired subsidy assignment that enforces $T$ as an equilibrium in the extension of the broadcast game in $G$, we will exploit appropriate subsidy assignments for the broadcast games in each copy of $G$. We have the following lemma.

\begin{lemma}\label{lem:sub-0-c}
Let $c_j>0$. Consider a broadcast game on a graph $G^j$ whose edges have weights in $\{0,c_j\}$ and let $T^j$ be a minimum spanning tree of $G^j$. Then, there is a subsidy assignment $b^j$ of cost at most $\wgt(T^j)/e$ that enforces $T^j$ as an equilibrium in the extension of the game with subsidies $b^j$.
\end{lemma}

\begin{proof}
We call edges of weight $0$ and $c_j$ {\em light} and {\em heavy} edges, respectively. We also call a player associated with a node $v$ a {\em light} player if the weight of the edge connecting $v$ to its parent in $T^j$ is zero; otherwise, we call $v$ a {\em heavy} player. We denote by $m_a$ the number of heavy players which use edge $a$. Clearly, $m_a\leq n_a(T^j)$.

We will introduce a {\em virtual cost} associated with each edge of $T^j$ in order to upper-bound the contribution of the edge to the real cost experienced by each player that uses the edge in $T^j$ in the extension of the game with subsidies. In particular, given subsidies $y_a$ assigned to the heavy edge $a$ with $y_a\in [0,c_j]$, we define the virtual cost of edge $a$ as $\vc(a,y_a) = c_j \ln \frac{m_a}{m_a-1+y_a/c_j}$. The virtual cost of a light edge is always zero; observe that no subsidies have to be assigned to these edges.

\begin{claim}\label{claim:vc-bound}
For any heavy edge $a$ with subsidies $y_a$, it holds that $\vc(a,y_a) \geq \frac{c_j-y_a}{n_a(T^j)}$.
\end{claim}
\begin{proof}
We use the inequality $\ln x \leq x-1$ for $x\in[0,1]$. We have
\begin{eqnarray*}
\vc(a,y_a) &=& c_j \ln{\frac{m_a}{m_a-1+y_a/c_j}}\\
&=& -c_j \ln{\left(1-\frac{c_j-y_a}{m_a c_j}\right)}\\
&\geq&  \frac{c_j-y_a}{m_a}\\
&\geq& \frac{c_j-y_a}{n_a(T^j)}.
\end{eqnarray*}
\qed
\end{proof}

\begin{defn}\label{defn:1}
Consider a path $q$ in $T^j$ and a subsidy assignment $y$ on the edges of $T^j$. We say that $y$ is such that subsidies are packed on the least crowded heavy edges of $q$ if $y_a<c_j$ for a heavy edge $a$ implies that $y_{a'}=0$ for every heavy edge $a'$ of $q$ with $m_{a'}>m_a$.
\end{defn}

We extend the notation of virtual cost so that $\vc(q,b^j)$ denotes the sum of the virtual cost of the edges of a path $q$ in $T^j$ under the subsidy assignment $b^j$. The following claim follows by the definitions and will be very useful later.

\begin{claim}\label{claim:packed}
Consider a path $q$ and denote by $q'$ the set of heavy edges of $q$ and a subsidy assignment $y$. If $\cup_{a\in q'}{\{m_a\}}$ consists of the $|q'|$ consecutive integers $t-|q'|+1, t-|q'|+2, ..., t$, then the virtual cost of path $q$ when subsidies are packed on its least crowded heavy edges is $\vc(q,y) = c_j \ln\frac{t}{t-|q'|+y(q)/c_j}$.
\end{claim}

\begin{proof}
Recall that the only edges that contribute to the virtual cost of $q$ are the heavy edges in $q'$. If $y(q)=0$ (i.e., no subsidies are put on the edges of $q'$), the virtual cost is
\begin{eqnarray*}
\vc(q,y) &=& \sum_{a\in q'}{\vc(a,y_a)}\\
&=& \sum_{a\in q'}{c_j \ln{\frac{m_{a}}{m_{a}-1+y_a/c_j}}}\\
&=&\sum_{i=t-|q'|+1}^t{c_j\ln{\frac{i}{i-1}}}\\
&=&c_j\ln{\frac{t}{t-|q'|+y(q)/c_j}}.
\end{eqnarray*}
The first two equalities follow by the definition of the virtual cost, the third one follows since $\cup_{a\in q'}{\{m_a\}}=\{t-|q'|+1, t-|q'|+2, ..., t\}$ and $y_a=0$, and the last one is obvious.

We now consider the case $y(q)>0$. Since subsidies are packed on the least crowded heavy edges of $q$, there must be a heavy edge $a\in q'$ such that $y_a>0$ so that $y_{a'}=0$ for each heavy edge $a'$ with $m_{a'}>m_a$ and $y_{a''}=c_j$ for each heavy edge $a''$ with $m_{a''}<m_a$. Let $q'_1=\{a'\in q':y_{a'}=0\}$ and $q'_2=q'\setminus (q'_1\cup \{a\})$. Observe that the edges of $q'_2$ and the light edges of $q$ do not contribute to the virtual cost of $q$. Hence,
\begin{eqnarray*}
\vc(q,y) &=& \sum_{a'\in q'_1}{\vc(a',y_{a'})} +\vc(a,y_a) \\
&=& \sum_{a'\in q'_1}{c_j\ln{\frac{m_{a'}}{m_{a'}-1}}}+c_j\ln{\frac{m_a}{m_a-1+y_a/c_j}}\\
&=& c_j\sum_{i=t-|q'_1|+1}^t{\ln{\frac{i}{i-1}}}+c_j\ln\frac{t-|q'_1|}{t-|q'_1|-1+y_a/c_j}\\
&=& c_j\ln{\frac{t}{t-|q'_1|-1+y_a/c_j}}\\
&=& c_j\ln{\frac{t}{t-|q'|+y(q)/c_j}}
\end{eqnarray*}
The first two equalities follow by the definition of the virtual cost, the third one follows since the definition of $q'_1$ implies that $\cup_{a'\in q'_1}{\{m_{a'}\}}=\{t-|q'_1|+1, t-|q'_1|+2, ..., t\}$ and $m_a=t-|q'_1|$, the fourth equality is obvious, and the last one follows since $y(q)=y_a+|q'_2|c_j$ and $|q'|=|q'_1|+|q'_2|+1$.
\qed\end{proof}

Figure \ref{fig:visual} provides a visualization of the virtual cost in a path when subsidies are packed on its less crowded heavy edges.

\begin{figure}[htbp]
\centerline{\psfig{file=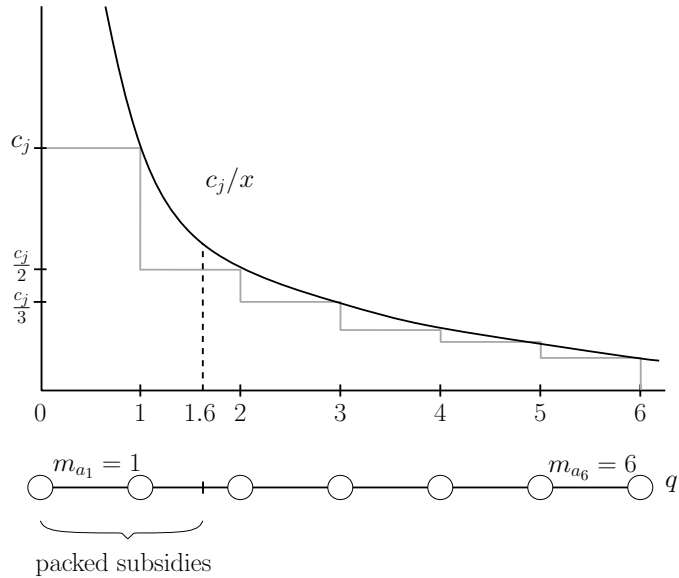,width=3.5in, height=3in}}  
\caption{A visualization of the virtual cost in a path $q$, with $6$ heavy edges and $\cup_{a\in q}{\{m_a\}}=\{1, 2, ..., 6\}$, when subsidies are packed on its less crowded edges. The leftmost edge and a fraction of 60\% of the second leftmost one have been subsidized. The virtual cost $\ln\frac{6}{1.6}$ (see Claim \ref{claim:packed}) is the area to the right of the dashed line that is below the black line. The real cost experienced by the player associated with the far left node is the area to the right of the dashed line that is below the grey line.
} \label{fig:visual}
\end{figure}

Now, we compute the subsidy assignment $b^j$ that assigns no subsidies to the light edges and subsidies to the heavy edges of $T^j$ as follows. Denote by $L$ the set of leaf-nodes of $T^j$ such that the path $T^j_u$ connecting such a leaf-node $u$ to the root node $r$ in $T^j$ contains at least one heavy edge. For each leaf-node $u$ of $L$, we pack subsidies to the least crowded heavy edges of $T^j_u$ so that the virtual cost on the path $T^j_u$ is exactly $c_j$. In particular, let $S$ be the set of edges of $T^j$ defined as follows: a heavy edge $(v,p(v))$ belongs to $S$ if $\vc(T^j_{p(v)},0)< c_j$ and $\vc(T^j_v,0)\geq c_j$. Observe that the set $S$ disconnects the leaves of $L$ from the root node. Indeed, if this was not the case, there would be a heavy edge that is used by exactly one heavy player and is not assigned any subsidies; by the definition of the virtual cost, its virtual cost would be infinite. All heavy edges that are on the side of the partition together with the root node are assigned zero subsidies; the heavy edges on the other side of the partition are assigned subsidies of $c_j$ and do not contribute to the virtual cost of the paths they belong to. An edge $a=(v,p(v))$ of $S$ is assigned subsidies $b^j_a $ with
$$b^j_a = c_j\left(1-m_a \left(1-\exp\left(\frac{\vc(T^j_{p(v)},0)}{c_j}-1\right)\right)\right).$$
This definition implies that $\vc(T^j_{p(v)},0)+\vc(a,b^j_a) = c_j$. In this way, we guarantee that the virtual cost of any path to the root node $r$ is at most $c_j$ if it contains at least one heavy edge and zero otherwise.

We will now show that, given the subsidies we have assigned to the edges of $T^j$, no player has an incentive to deviate from her path to the root in $T^j$. Consider the player associated with a node $u$ and let $q_u$ be another path from $u$ to $r$ in $T^j$. Recall that the definition of subsidies and Claim \ref{claim:vc-bound} guarantees that the cost experienced by the player that uses $T^j_u$ is at most $c_j$. Now, consider the edges of $q_u$ that do not belong to $T$ (since $q_u\not=T^j_u$ there is at least one such edge). If any such edge has weight $c_j$, this means that, by deviating to $q_u$, the player associated with node $u$ would experience a cost of at least $c_j$ and, hence, has no incentive to do so. So, in the following we assume that the edges of $q_u$ that do not belong to $T^j$ have zero weight. Now, consider the subgraph $H$ of $G^j$ induced by the edges in the paths $T^j_u$ and $q_u$. Let $C$ be a cycle of $H$. It consists of edges of $T^j$ and edges not belonging to $T^j$ that have zero weight. This implies that all edges in $C$ have zero weight, otherwise we could replace an edge of $C$ that belongs to $T^j$ (and has non-zero weight) with an edge of $C$ that does not belong to $T$ (which has zero weight) and obtain a spanning tree of $G^j$ with strictly smaller weight than $T^j$. This contradicts the assumption that $T^j$ is a minimum spanning tree of $G^j$. So, all edges of $T^j_u$ that are contained in a cycle in $H$ have zero weight. The remaining edges of $T^j_u$ are also used by $q_u$. We conclude that the total cost experienced by the player associated with node $u$ is the same no matter whether she uses path $T^j_u$ or $q_u$ and, hence, she has no incentive to deviate from path $T^j_u$ to $q_u$.

We will now show that the total amount of subsidies put on the edges of $T^j$ in this way is exactly $\wgt(T^j)/e$. In order to show this, we will show that the total amount of subsidies put on the edges of $T^j$ equals the total amount of subsidies put by the same procedure on the edges of another tree $\bar{T}$ that consists of a single path from the root that spans all the nodes and has the same number of heavy edges as the original one. As an intermediate step, consider two edges $g_1=(v_1, p(v_1))$ and $g_2=(v_2,p(v_2))$ of $S$ such that the least common ancestor $u$ of nodes $v_1$ and $v_2$ in $T^j$ has largest depth. We denote by $h_1$ and $h_2$ the number of heavy edges in the subtrees of $v_1$ and $v_2$, respectively, and by $q_1$ and $q_2$ the paths connecting $u$ to $v_1$ and $v_2$ in $T$, respectively. Also, denote by $q'_1$ (resp. $q'_2$) the subset of $q_1$ (resp. $q_2$) consisting of heavy edges. Assume that the virtual cost of the path in $T^j$ from $r$ to $u$ is $\ell$ for some $\ell\in [0,c_j)$; then, the virtual cost of the paths $q_1$ and $q_2$ is exactly $c_j-\ell$. Denote by $b^j_{g_1}$ and $b^j_{g_2}$ the subsidies assigned to edges $g_1$ and $g_2$ by the above procedure, respectively. Since the edges $g_1$ and $g_2$ are selected so that their least common ancestor $u$ has largest depth, the edges in the path $q_1$ are not used by any heavy player different than those in the subtree of $T^j$ rooted at $v_1$. Similarly, the edges in the path $q_2$ are not used by any heavy player other than those in the subtree of $T^j$ rooted at $v_2$. Hence, both paths $q_1$ and $q_2$ satisfy the condition of Claim \ref{claim:packed} above in the sense that $\cup_{a\in q'_1}{\{m_a\}} = \{h_1+1, h_1+2, ..., h_1+|q'_1|\}$ and $\cup_{a\in q'_2}{\{m_a\}} = \{h_2+1, h_2+2, ..., h_2+|q'_2|\}$, respectively. Hence, we can express the virtual cost of paths $q_1$ and $q_2$, respectively, as
$$\vc(q_1,b^j)=c_j\ln \frac{h_1+|q'_1|}{h_1+b^j_{g_1}/c_j} = c_j-\ell\,$$
and
$$\vc(q_2,b^j)=c_j\ln \frac{h_2+|q'_2|}{h_2+b^j_{g_2}/c_j} = c_j-\ell.$$
Equivalently, we have $h_1+|q'_1| = \exp(1-\ell/c_j) (h_1+b^j_{q_1}/c_j)$ and $h_2+|q'_2| = \exp(1-\ell/c_j) (h_2+b^j_{g_2}/c_j)$.
By summing these last two equalities, we obtain that $h_1+h_2+|q'_1|+|q'_2| = \exp(1-\ell/c_j)(h_1+h_2+(b^j_{g_1}+b^j_{g_2})/c_j)$ which implies
\begin{eqnarray}\label{eq:vc1}
c_j\ln\frac{h_1+h_2+|q'_1|+|q'_2|}{h_1+h_2+(b^j_{g_1}+b^j_{g_2})/c_j} &=& c_j-\ell.
\end{eqnarray}

Now, consider the following transformation of $T^j$ to another tree $T'$. The only change is performed in the paths $q_1$, $q_2$ and all subtrees of their nodes besides node $u$. We replace all these edges in $T^j$ with a path $q$ originating from $u$ and spanning all the nodes in $q_1$ and $q_2$ and their subtrees so that exactly $h_1+h_2+|q'_1|+|q'_2|$ heavy edges are used. Let $q'$ be the set of heavy edges in $q$. We pack a total amount $c_j(h_1+h_2)+b^j_{g_1}+b^j_{g_2}$ of subsidies (i.e., the same total amount of subsidies used in the heavy edges of $q_1$, $q_2$ and in the subtrees of nodes $v_1$ and $v_2$ in $T^j$) on the least crowded heavy edges of path $q$ while the assignment of subsidies on the other heavy edges of $T'$ is the same as in the corresponding edges in $T^j$. Now, the path $q$ satisfies the condition of the Claim \ref{claim:packed} above in the sense that $\cup_{a\in q'}{\{m_a\}}=\{1, ..., h_1+h_2+|q'_1|+|q'_2|\}$. Hence, the virtual cost of path $q$ when a total amount $c_j(h_1+h_2)+b^j_{g_1}+b^j_{g_2}$ of subsidies is packed on its least crowded heavy edges is the one at the left hand side of equality (\ref{eq:vc1}) and is exactly $c_j-\ell$ while the virtual cost of the path from the root to $u$ in $T'$ is not affected by our transformation (the number of heavy players in the subtree of node $u$ stays the same after the transformation) and is equal to $\ell$. Hence, we have transformed $T^j$ to $T'$ so that the same total amount of subsidies is used and guarantees that any path from the root to a node has virtual cost at most $c_j$. By executing the same transformation in $T'$ repeatedly, we end up with a tree $\bar{T}$ which consists of a path $\bar{q}$ spanning all the nodes and has the same number of heavy edges as the original tree $T^j$ (and, obviously, the same total weight). Let $\bar{q}'$ be the set of heavy edges in $\bar{q}$. The transformation guarantees that by packing the original total amount of subsidies on the least crowded heavy edges of $\bar{q}$, we have that its virtual cost is exactly $c_j$. Also, note that $\cup_{a\in \bar{q}'}{\{m_a\}} = \{1, 2, ..., |\bar{q}'|\}$ and, by Claim \ref{claim:packed}, the virtual cost of path $\bar{q}$ when a total amount $b(T^j)$ of subsidies is packed on its least crowded heavy edges is $c_j\ln\frac{|\bar{q}'|c_j}{b(T^j)}=c_j$. This implies that the total amount of subsidies in the original tree is $b(T^j) = |\bar{q}'| c_j/e = \wgt(T^j)/e$.
\qed\end{proof}

Now, for each copy $G^j$ of $G$, we use the procedure in the proof of Lemma \ref{lem:sub-0-c} to compute a subsidy assignment $b^j$ so that the tree $T^j$ is an equilibrium for the extension of the broadcast game on the graph $G^j$ with subsidies $b^j$. For the original game on the graph $G$, we assign an amount of $b_a = \sum_{j=1}^{k}{b_a^j}$ as subsidies to edge $a$ (i.e., equal to the total amount of subsidies assigned to $a$ for each copy of $G$). We can easily show that $T$ is an equilibrium for the original broadcast game. Let $T_u$ be the path used by the player associated with node $u$ in $T$ and denote by $q_u$ a different path connecting $u$ with $r$ in $G$. The cost experienced by the player associated with node $u$ in $T$ is
\begin{eqnarray*}
\co_u(T;b) &=& \sum_{a\in T_u}{\frac{w_a-b_a}{n_a(T)}}\\
&=& \sum_{a\in T_u}\sum_{j=1}^k{\frac{w_a^j-b_a^j}{n_a(T^j)}}\\
&=& \sum_{j=1}^k\sum_{a\in T_u}{\frac{w_a^j-b_a^j}{n_a(T^j)}}\\
&\leq & \sum_{j=1}^k\sum_{a\in q_u}{\frac{w_a^j-b_a^j}{n_a(T^j)+1-n_a^u(T^j)}}\\
&=& \sum_{a\in q_u}{\frac{w_a-b_a}{n_a(T_{-u},q'_u)}} \\
&=& \co_u(T_{-u},q_u;b),
\end{eqnarray*}
i.e., not larger than the cost she would experience by deviating to path $q_u$, which implies that $T$ is indeed enforced as an equilibrium in the extension of the broadcast game on $G$ with the particular subsidies. The equalities follow by the definition of the cost experienced by the player associated with node $u$, or the definition of our decomposition, or due to the exchange of sums, and the inequality follows since, by Lemma \ref{lem:sub-0-c}, $T^j$ is enforced as an equilibrium in the extension of the broadcast game in $G^j$ with subsidies $b^j$. The bound on the amount of subsidies follows by the guarantee for the cost of the subsidy assignments $b^j$ from Lemma \ref{lem:sub-0-c} and the last property of our decomposition, and since the subsidy $b_a$ on each edge $a$ is defined as $b_a=\sum_{j=1}^k{b^j_a}$.
\qed\end{proof}

We now present our lower bound.

\begin{theorem}\label{thm:lb}
For every $\epsilon>0$, there exist a broadcast game on a graph $G$ and a minimum spanning tree $T$ of $G$ such that any subsidy assignment that enforces $T$ as equilibrium of the extension of the broadcast game with subsidies is at least $(1/e-\epsilon)\wgt(T)$.
\end{theorem}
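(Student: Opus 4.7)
The plan is to exhibit, for each large $n$, a broadcast game whose minimum spanning tree is a path and whose equilibrium constraints alone force any subsidy assignment to cost at least $(1/e - \epsilon)\wgt(T)$, matching the upper bound of Theorem~\ref{thm:upper-bound} asymptotically. The construction is as simple as possible: take $n + 1$ nodes $r = v_0, v_1, \ldots, v_n$, unit-weight path edges $e_j = (v_{j-1}, v_j)$ for $j = 1, \ldots, n$, and a direct edge $(v_i, r)$ of weight $1 + \delta$ for each $i \ge 1$, where $\delta > 0$ is a small parameter chosen later. Since every direct edge is strictly heavier than any path edge, the unique minimum spanning tree $T$ is the path, and $\wgt(T) = n$.

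Fix any subsidy assignment $b$ that enforces $T$, and write $b_j$ for the subsidy on $e_j$. In $T$, edge $e_j$ is used by the $n - j + 1$ players $v_j, v_{j+1}, \ldots, v_n$, so $\co_{v_i}(T;b) = \sum_{j=1}^i \frac{1-b_j}{n-j+1}$, while the deviation of $v_i$ to the direct edge $(v_i, r)$ costs exactly $1 + \delta$. This produces the family of constraints $\sum_{j=1}^i \frac{1-b_j}{n-j+1} \le 1 + \delta$ for every $i \in \{1,\ldots,n\}$. A brief case analysis shows that every other deviation is dominated: taking the path back toward $r$ up to $v_k$ and then using $v_k$'s direct edge ($k < i$) yields exactly the constraint associated with $v_k$, while extending the path away from $r$ and using $v_k$'s direct edge ($k > i$) has cost at least $1+\delta$, which is already implied by $v_i$'s own direct-edge constraint. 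Since the left-hand sides are monotone in $i$, the binding inequality is the one for $i = n$, which rearranges to $\sum_{j=1}^n \frac{b_j}{n-j+1} \ge \H_n - 1 - \delta$.

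Minimizing $\sum_j b_j$ subject to this weighted-sum lower bound and $b_j \in [0,1]$ is an elementary fractional knapsack in which the best ratio of constraint mass to cost is obtained at the largest indices $j$, since there $1/(n-j+1)$ is closest to $1$. The optimum therefore sets $b_j = 1$ on the largest $t - 1$ indices and places a fractional value on one further index, where $t$ is the smallest integer with $\H_t \ge \H_n - 1 - \delta$; hence $\sum_j b_j \ge t - 1$. Using $\H_m = \ln m + \gamma + O(1/m)$, this inequality forces $t \ge n\, e^{-1-\delta}\bigl(1 - o_n(1)\bigr)$. Given $\epsilon > 0$, I first pick $\delta$ small enough that $e^{-1-\delta} > 1/e - \epsilon/2$, and then $n$ large enough that the $o_n(1)$ slack and the additive loss of $1$ are absorbed into $\epsilon/2$. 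This yields $\sum_j b_j \ge (1/e - \epsilon)\, n = (1/e - \epsilon)\,\wgt(T)$, as required.

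The only mildly delicate step is the exhaustive check that no longer deviation in the undirected graph imposes a stricter constraint than the direct-edge deviations; this is where I would be most careful, since allowing other deviations to tighten the LP would weaken the bound. Once the monotonicity claim is established, the rest of the argument is just the fractional-knapsack computation, which is essentially the classical harmonic-sum extremal example underlying the $\H_n$ price-of-stability result of Anshelevich et al.~\cite{ADK+04}.
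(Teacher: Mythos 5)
Your proof is correct and follows essentially the same route as the paper: a path MST whose far-end player has a cheap escape edge to the root, so the equilibrium condition forces the weighted harmonic-sum constraint, and packing subsidies on the least crowded edges (your fractional-knapsack step) shows roughly $n/e$ edges must be fully subsidized. The differences are cosmetic---the paper closes a cycle with a single weight-$1$ bypass edge instead of adding a $(1+\delta)$-weight direct edge at every node, which also renders your case analysis of other deviations unnecessary: for a lower bound only the necessity of $v_n$'s direct-edge constraint matters, and additional deviation constraints could only increase the required subsidies, not weaken the bound.
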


\begin{proof}
Consider the graph $G$ which consists of a cycle with $n+1$ edges of unit weight that span the root node $r$ and the $n$ nodes which are associated with the players. Let $a=(r,u)$ be an edge incident to the root node $r$ and let $T$ be the path that contains all edges of $G$ besides $a$. Clearly, $T$ is a minimum spanning tree of $G$. Now, in order to satisfy that the player associated with node $u$ has no incentive to deviate from her strategy in $T$ and use edge $a$ instead, we have to put subsidies on some of the edges of the path $T$. The maximum decrease in the cost of the player associated with node $u$ is obtained when subsidies are packed on the least crowded edges of $T$ (i.e., on the edges of $T$ that are further from the root); equivalently, the minimum amount of subsidies necessary in order to decrease the cost of this player to $1$ is obtained when subsidies are packed on the least crowded edges of $T$. Let $k$ be the number of edges that are subsidized. Since the player associated with node $u$ has no incentive to deviate, the cost of $\H_n-\H_{k}$ she experiences at the $n-k$ edges on which we do not put subsidies is at most $1$ while the total amount of subsidies is at least $k-1$. Using the inequality $x\geq \ln{(1+x)}$ for $x\geq 0$, we obtain $1\geq \H_n-\H_k = \sum_{t=k+1}^n{\frac{1}{t}}\geq \sum_{t=k+1}^n{\ln\frac{t+1}{t}} = \ln\frac{n+1}{k+1}$ which implies that that the total amount of subsidies is at least $k-1\geq \frac{n+1}{e}-2$. The weight of $T$ is $n$ and the bound follows by selecting $n$ to be sufficiently large.
\qed\end{proof}

\section{All-or-nothing subsidies}\label{sec:all-or-nothing}
In this section, we consider the all-or-nothing version of SNE. Interestingly, in contrast to the standard version, we prove (Theorem \ref{thm:hardness-all-or-nothing}) that its optimization version is hard to approximate within any factor.

\begin{theorem}\label{thm:hardness-all-or-nothing}
Given an instance of all-or-nothing {\sc Stable Network Enforcement} consisting of a broadcast game on a graph $G$ and a minimum spanning tree $T$ of $G$, approximating (within any factor) the minimum cost over all-or-nothing subsidy assignments that enforce $T$ as an equilibrium in the extension of the broadcast game is NP-hard.
\end{theorem}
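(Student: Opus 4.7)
The plan is to prove Theorem \ref{thm:hardness-all-or-nothing} by a gap-producing reduction from an NP-hard decision problem (for instance a variant of 3-SAT, or the restricted bin packing used in the proof of Theorem \ref{thm:broadcastNPC}) that, given any parameter $M\geq 1$, produces an all-or-nothing SNE instance with the following gap: in the YES case the optimum cost is at most some value $c$ depending only on the input size, whereas in the NO case every valid all-or-nothing subsidy costs at least $Mc$. Since $M$ can be made arbitrarily large in the construction, this rules out polynomial-time $\alpha$-approximation for every $\alpha$, and in fact yields NP-hardness of approximation within any factor.

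The construction will combine three kinds of gadgets. A \emph{choice} gadget uses a small family of tree edges to encode a candidate assignment for the source problem; the all-or-nothing restriction is exactly what forces each of these encoded decisions to be binary. A \emph{consistency} gadget consists of auxiliary non-tree edges and a few extra players whose best-response inequalities translate, after substitution of choice-edge subsidies, into the constraints of the source problem, so that the pattern of subsidies on the choice edges enforces $T$ in the broadcast game if and only if it encodes a valid source solution. Finally, an \emph{amplifier} gadget built from Bypass components in the spirit of Lemma \ref{lem:gadget} contains one very heavy tree edge $e^\star$ of weight $Mc$ such that, whenever the choice edges do not encode a valid source solution, some player on the amplifier path has a strictly profitable deviation whose only blocker inside $T$ is $e^\star$ itself.

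The proof will then proceed in the natural order: first a combinatorial check that the designated $T$ is a minimum spanning tree of the constructed graph; then the YES direction, where a satisfying source assignment is translated into a subsidy of cost exactly $c$ on the selected choice edges, and a direct computation using the consistency inequalities together with Lemma \ref{lem:gadget} verifies that no player deviates; finally the NO direction, where we assume an all-or-nothing subsidy $b$ enforces $T$ and read off from the subsidy pattern on the choice edges a candidate source assignment. If the source instance is unsatisfiable, this candidate violates some source constraint, which, after passing through the consistency gadget, produces a player with a strictly profitable deviation; the amplifier gadget ensures that the only all-or-nothing way to block this deviation inside $T$ is to fully subsidize $e^\star$, so $b$ pays the additional amount $Mc$.

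The main obstacle is establishing the rigidity of the amplifier: one must guarantee that, no matter which other tree edges are fully subsidized in $b$, every forced deviation triggered by the unsatisfiability of the source instance remains profitable unless $e^\star$ is fully subsidized. This requires careful tuning of the Bypass parameters $\kappa$ and $\ell$ (so that the relevant player populations on the basic paths are tight), together with a case analysis over all subsets of choice-edge subsidies, ensuring that no cheaper combination of tree-edge subsidies simultaneously neutralizes every violating deviation. Once this rigidity is proved, letting $M$ grow yields inapproximability within any factor.
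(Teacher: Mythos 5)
Your high-level shape does match the paper's: a gap reduction from a SAT-type problem in which the all-or-nothing constraint forces binary choices on cheap tree edges, consistency is encoded through best-response inequalities, and very heavy edges create an unbounded gap between the satisfiable and unsatisfiable cases. However, there are genuine gaps. First, everything that makes the theorem hard is deferred: you never construct the choice and consistency gadgets, i.e., you never exhibit edge weights for which the equilibrium conditions under all-or-nothing subsidies on unit-weight edges become exactly equivalent to a satisfying assignment of the formula. In the paper this is the bulk of the proof: literal gadgets whose two unit edges must carry exactly one subsidy (the \emph{balanced} property, Lemma \ref{lem:literal-gadget}), consistency gadgets forcing all occurrences of a variable to be subsidized the same way (Corollary \ref{cor:consistency}), and clause gadgets whose clause player stays put iff some literal edge of her clause is subsidized (Lemma \ref{lem:clause-gadget}); all of this is driven by carefully tuned perturbations of the form $\frac{1}{n_j-3}$, $\frac{1}{n_j}$, $\frac{1}{2n_j^2}$ with the $n_j$ chosen doubly exponentially in the label $j$ so that contributions from gadgets nearer the root are negligible (Lemma \ref{lem:distance}). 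Without such weights, the claim that the best-response inequalities ``translate into the constraints of the source problem'' is an assertion, not an argument.

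Second, the amplifier you propose is the wrong mechanism for SNE. The {\tt Bypass} gadget of Lemma \ref{lem:gadget} triggers deviations according to how many players lie below the connector node, which matters in Theorem \ref{thm:broadcastNPC} only because the spanning tree there is not fixed; in SNE the tree $T$ is given, so every count $n_a(T)$ is fixed and subsidies change only costs, never loads, so a Bypass-style amplifier cannot detect whether the choice edges encode a valid assignment. Moreover, you do not need a single designated edge $e^\star$ whose subsidization is the unique blocker---this is precisely the rigidity you concede you cannot establish, and it would require ruling out every cheap combination of subsidies elsewhere on the deviating players' tree paths. The paper sidesteps this entirely: every non-light edge has weight at least $K$, every relevant deviation compares two paths each containing exactly one heavy edge, and the comparison is decided by the tiny tuned fractions; hence if $\phi$ is unsatisfiable, no light-only assignment enforces $T$, so any enforcing all-or-nothing assignment must fully subsidize \emph{some} heavy edge and already costs at least $K$, while satisfiability yields enforcement at cost $3|C|$. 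Taking $K$ arbitrarily large gives inapproximability within any factor. Until you supply concrete gadget weights and either prove or replace the amplifier rigidity, the proposal remains a plan rather than a proof.
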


\begin{proof}
We use a reduction from instances of 3SAT-4 which consist of CNF formulas such that each clause contains exactly three literals (corresponding to different variables) and each variable appears in at most four clauses. Deciding whether such an instance has a truthful assignment or not is NP-hard \cite{T84}.
Given an instance $\phi$ of 3SAT-4 with a set of clauses $C$, our construction defines (in polynomial time) a broadcast game on a graph $G$ and identifies a particular minimum spanning tree $T$ of $G$. Our construction uses a parameter $K$ which is significantly larger than $3|C|$. We will show that deciding whether the minimum cost over all-or-nothing subsidy assignments that enforce $T$ as an equilibrium in the extension of the broadcast game is at most $3|C|$ or at least $K$ is NP-hard.

First, we assign a {\em label} from $\{1, 2, ..., 9\}$ to each variable in such a way that two variables that appear in the same clause are assigned different labels. Due to the fact that each variable appears in at most four clauses, it should have a different label than at most eight other variables. Hence, nine labels suffice and the corresponding labeling can be computed in polynomial time. For $j=1, 2, ..., 9$, we define the constants $n_j=\frac{1}{4} 28^{2^{9-j}}$ or, equivalently, for $j=1, 2, ..., 8$, $n_j=4n_{j+1}^2$ with $n_9=7$.

The graph $G$ has a root-node $r$ and consists of several gadgets: {\em literal gadgets}, {\em clause gadgets}, {\em consistency gadgets}, and {\em auxiliary} nodes and edges. The non-root nodes of $G$ are partitioned into {\em critical} and {\em non-critical} ones. The edges of $G$ belong to three different types: {\em heavy edges} of weight at least $K$, {\em ultra light edges} of zero weight, and {\em light edges} of unit weight.

We start with the definition of the literal gadgets (see Figure \ref{fig:literal}). For each appearance of a literal $\ell$ in a clause $c$, we have a literal gadget which consists of four non-critical nodes $l(c,\ell)$, $u(c,\bar{\ell})$, $u(c,\ell)$ and $v_1(c,\ell)$, and the critical nodes $v_2(c,\ell)$ and $v_3(c,\ell)$. Let $j$ denote the label of the variable corresponding to literal $\ell$. Then, there are the following edges: the light edges $(l(c,\ell),u(c,\bar{\ell}))$ and $(u(c,\bar{\ell}),u(c,\ell))$, the heavy edges $(l(c,\ell),v_1(c,\ell))$, $(v_1(c,\ell),v_2(c,\ell))$, and $(v_3(c,\ell),u(c,\ell))$ of weight $K$, the heavy edge $(l(c,\ell),v_3(c,\ell))$ of weight $K+\frac{1}{n_j-3}$ and the heavy edge $(v_2(c,\ell),u(c,\ell))$ of weight $\frac{3K}{2}-\frac{1}{n_j+1}$. Among them, the first five edges belong to $T$ while the last two ones do not.

\begin{figure}[htbp]
\centerline{\psfig{file=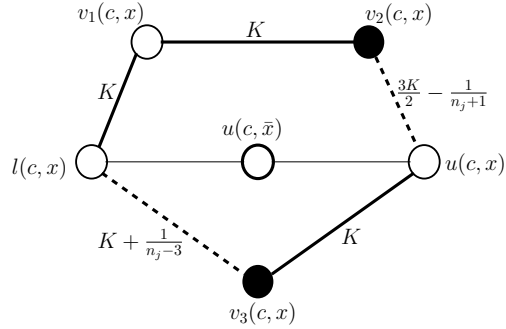,height=1.7in}}  
\caption{The literal gadget for the appearance of literal $x$ in clause $c$. The black nodes are the critical ones and the solid edges are the ones that belong to $T$. The thick and thin solid edges are the heavy and light ones, respectively.} \label{fig:literal}
\end{figure}

For each clause $c=(\ell_1,\ell_2,\ell_3)$ with literals corresponding to different variables with labels $j_1$, $j_2$, and $j_3$ with $j_1<j_2<j_3$, we have a clause gadget (see Figure \ref{fig:clause}) which connects the three literal gadgets corresponding to the appearance of literals $\ell_1$, $\ell_2$, and $\ell_3$ in clause $c$ as follows. The node $l(c,\ell_1)$ coincides with the root-node $r$, the node $l(c,\ell_2)$ coincides with node $u(c,\ell_1)$, and the node $l(c,\ell_3)$ coincides with $u(c,\ell_2)$. There is an extra critical node $v(c)$ which is connected through a heavy edge of weight $K$ to node $u(c,\ell_3)$ and through a heavy edge of weight $K+\frac{1}{n_{j_1}}+\frac{1}{n_{j_2}-3}+\frac{1}{n_{j_3}-3}$ to the root-node $r$. Among these two edges, the first one belongs to $T$ while the second one does not.

\begin{figure*}[htbp]
\centerline{\psfig{file=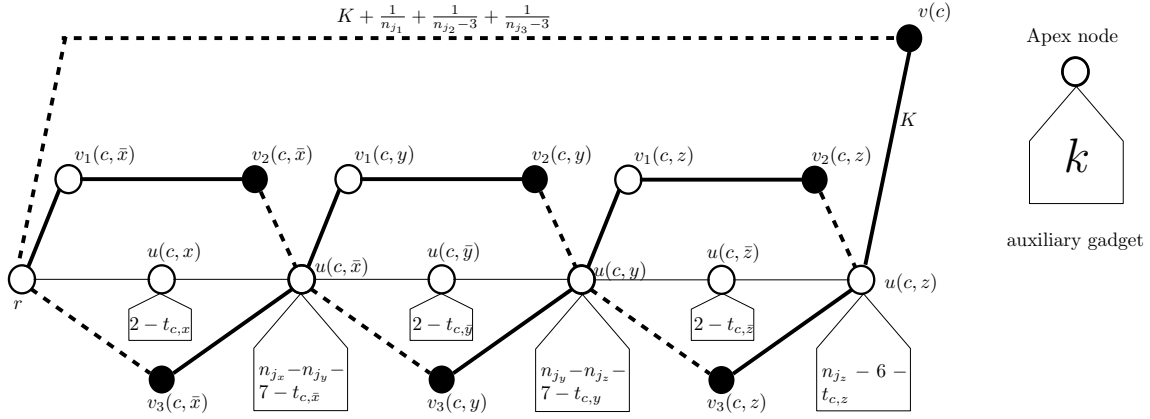,width=6in}}  
\caption{The clause gadget for clause $c=(\bar{x},y,z)$ along with the auxiliary nodes and edges. The auxiliary gadget (shown in the right) indicates that $k$ auxiliary nodes are connected to the apex node via ultra light edges} \label{fig:clause}.
\end{figure*}

Let $c_1$, $c_2$, $c_3$, and $c_4$ be the (at most) four clauses in which variable $x$ appears. For each $i=1, 2, 3$, we have a consistency gadget that connects the two literal gadgets corresponding to the appearance of variable $x$ or its negation in clauses $c_i$ and $c_{i+1}$. Let $j$ be the label of variable $x$. We use two different consistency gadgets depending on whether the variable appears as the same literal in both $c_i$ and $c_{i+1}$ or not. An $\ell$-$\ell$ consistency gadget corresponds to the appearance of literal $\ell$ in clauses $c_i$ and $c_{i+1}$ and consists of two critical nodes $u_1(c_i,c_{i+1},\ell)$ and $u_2(c_i,c_{i+1},\ell)$. Node $u_1(c_i,c_{i+1},\ell)$ is connected through a heavy edge of weight $K$ to node $u(c_i,\bar{\ell})$ of the literal gadget corresponding to the appearance of the literal $\ell$ in $c_i$ and a heavy edge of weight $K+\frac{1}{2n_j}$ to node $u(c_{i+1},\bar{\ell})$ of the literal gadget corresponding to the appearance of literal $\ell$ in $c_{i+1}$. Among these two edges, the first one belongs to $T$ while the second one does not.
Node $u_2(c_i,c_{i+1},\ell)$ is connected through a heavy edge of weight $K$ to node $u(c_{i+1},\bar{\ell})$ of the literal gadget corresponding to the appearance of the literal $\ell$ in $c_{i+1}$ and a heavy edge of weight $K+\frac{1}{2n_j}$ to node $u(c_{i},\bar{\ell})$ of the literal gadget corresponding to the appearance of literal $\ell$ in $c_{i}$. Among these two edges, the first one belongs to $T$ while the second one does not.
An $\ell$-$\bar{\ell}$ consistency gadget corresponds to the appearance of literals $\ell$ and $\bar{\ell}$ in clauses $c_i$ and $c_{i+1}$, respectively, and consists of two critical nodes $u_1(c_i,c_{i+1},\ell)$ and $u_2(c_i,c_{i+1},\ell)$. Node $u_1(c_i,c_{i+1},\ell)$ is connected through a heavy edge of weight $K$ to node $u(c_i,\ell)$ of the literal gadget corresponding to the appearance of the literal $\ell$ in $c_i$ and a heavy edge of weight $K+\frac{1}{n_j}+\frac{1}{2n_j^2}$ to node $u(c_{i+1},\ell)$ of the literal gadget corresponding to the appearance of literal $\bar{\ell}$ in $c_{i+1}$. Among these two edges, the first one belongs to $T$ while the second one does not. Node $u_2(c_i,c_{i+1},\ell)$ is connected through a heavy edge of weight $K$ to node $u(c_{i+1},\ell)$ of the literal gadget corresponding to the appearance of the literal $\bar{\ell}$ in $c_{i+1}$ and a heavy edge of weight $K$ to node $u(c_{i},\ell)$ of the literal gadget corresponding to the appearance of literal $\ell$ in $c_{i}$. Among these two edges, the first one belongs to $T$ while the second one does not. An example is depicted in Figure \ref{fig:consistency}.

\begin{figure}[ht]
\centerline{\psfig{file=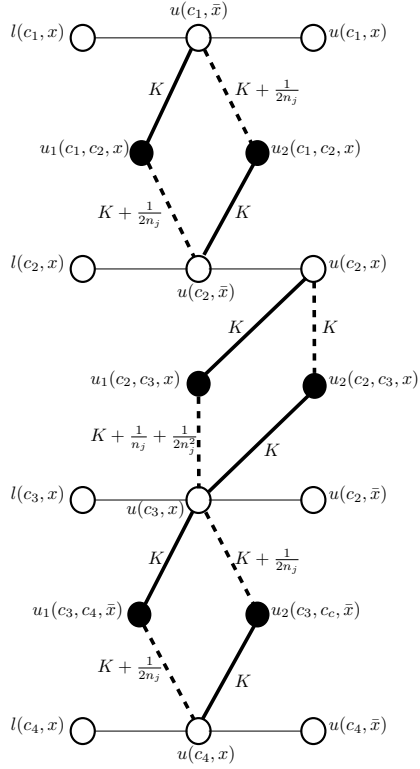,height=4in}}  
\caption{Three consistency gadgets connecting literal gadgets associated to the appearance of literal $x$ in clause $c_1$, literal $x$ in clause $c_2$, literal $\bar{x}$ in clause $c_3$, and literal $\bar{x}$ in clause $c_4$. The first and the third ones are $\ell$-$\ell$ consistency gadgets; the second one is an $\ell$-$\bar{\ell}$ consistency gadget. Only nodes $l(c,\ell)$, $u(c,\bar{\ell})$ and $u(c,\ell)$ are shown in each literal gadget. } \label{fig:consistency}
\end{figure}

The last step in the construction is to include auxiliary non-critical nodes connected through ultra light edges to nodes $u(c,\bar{\ell})$ and $u(c,\ell)$ for each clause $c$ and each literal $\ell$ that appear in $c$. These edges belong to $T$. The purpose of the auxiliary nodes and ultra light edges is to guarantee that for each appearance of a variable of label $j$ or its negation as a literal $\ell$ in a clause $c$, the number of players in $T$ that use edges $(l(c,\ell),u(c,\bar{\ell}))$ and $(u(c,\bar{\ell}),u(c,\ell))$ is exactly $n_j$ and $n_j-3$, respectively. This is done as follows. Consider a clause $c=(\ell_1,\ell_2,\ell_3)$ such that the literals $\ell_1$, $\ell_2$, and $\ell_3$ correspond to variables with labels $j_1$, $j_2$, and $j_3$ with $j_1<j_2<j_3$. For $i=1,2,3$, let $t_{c,\bar{\ell}_i}$ be the number of nodes in consistency gadgets to which node $u(c,\bar{\ell_i})$ is connected in $T$ and observe that $t_{c,\bar{\ell}_i}\in \{0,1,2\}$. We connect it to $2-t_{c,\bar{\ell}_i}$ additional auxiliary nodes through ultra light edges. Node $u(c,\ell_3)$ is connected in $T$ to node $v(c)$, node $v_3(c,\ell_3)$, and to $t_{c,\ell_3}$ nodes in the consistency gadgets it participates with $t_{c,\ell_3}\in \{0,1\}$. We connect $u(c,\ell_3)$ to $n_{j_3}-6-t_{c,\ell_3}$ additional auxiliary nodes through ultra light edges. For $i=1,2$, node $u(c,\ell_i)$ is connected in $T$ to nodes $v_3(c,\ell_i)$, $v_1(c,\ell_{i+1})$, and to $t_{c,\ell_i}$ nodes in the consistency gadgets it participates with $t_{c,\ell_i}\in \{0,1\}$; we connect it to $n_{j_i}-n_{j_{i+1}}-7-t_{c,\ell_i}$ additional nodes through ultra light edges (see Figure \ref{fig:clause}).

In the following, we refer to players associated to non-critical nodes as {\em non-critical players}. All other players are {\em critical}. The critical players associated with nodes $v_2(c,\ell)$ and $v_3(c,\ell)$ of a literal gadget corresponding to the appearance of literal $\ell$ in clause $c$ are called {\em literal players}. The critical players associated with nodes $u_1(c_1,c_2,\ell)$ and $u_2(c_1,c_2,\ell)$ in consistency gadgets corresponding to the appearance of literal $\ell$ (or literals $\ell$ and $\bar{\ell}$) in clauses $c_1$ and $c_2$ are called {\em consistency players}. The critical player associated with node $v(c)$ in the clause gadget corresponding to clause $c$ is called {\em clause player}.

Observe that our construction guarantees that $T$ is connected, spans the nodes of $G$, and the number of its edges is equal to the number of non-root nodes in $G$. Hence, it is indeed a tree. Also, note that it consists of all light and ultra light edges and heavy edges of weight exactly $K$; the edges of $G$ not included in $T$ are all heavy and, hence, $T$ is a minimum spanning tree.

We have completed the description of our reduction. We use the term {\em light assignment} to refer to all-or-nothing subsidy assignments that subsidize only light edges (clearly, ultra light edges do not need to be subsidized). In the rest of the proof, we show that there exists a light assignment that enforces $T$ as an equilibrium in the extension of the broadcast game on $G$ if and only if $\phi$ is satisfiable (Corollary \ref{cor:T-sat-eq}). This is done in a sequence of steps which can be briefly described as follows:
\begin{itemize}
\item Step 1. First, we observe that light assignments of subsidies guarantee that the non-critical players have no incentive to deviate from their strategy in $T$ (Lemma \ref{lem:non-critical}).
\item Step 2. Then, we introduce the property of {\em balance} for light assignments which is proved to be equivalent to the fact that the critical literal players do not have an incentive to deviate either; this is done in Lemma \ref{lem:literal-gadget} using the definition of the literal gadgets.
\item Step 3. Then, we introduce the property of {\em consistency} for balanced light assignments which is proved to be equivalent to the fact that the critical consistency players do not have an incentive to deviate either; this is done in Corollary \ref{cor:consistency} using the definition of the consistency gadgets. A nice effect of this property is that there is a one-to-one and onto mapping between the consistent balanced light assignments and the assignments of values to the variables of $\phi$.
\item Step 4. Finally, we introduce an additional property for consistent balanced light assignments that is proved to be equivalent to the fact that the critical clause players do not have an incentive to deviate either and, hence, $T$ is enforced as an equilibrium in the broadcast game; this is done in Lemma \ref{lem:clause-gadget} using the definition of the clause gadgets. A nice effect of this property is that the mapping mentioned above is a one-to-one and onto mapping between the consistent balanced light assignments that satisfy this property and the truthful assignments of $\phi$.
\end{itemize}

We continue with Step 1.

\begin{lemma}\label{lem:non-critical}
Consider the extension of the broadcast game on $G$ with a light assignment of subsidies. Then, no non-critical player has an incentive to change her strategy in $T$.
\end{lemma}

\begin{proof}
Consider a non-critical player whose strategy in $T$ consists of light (and, possibly, ultra light) edges. By the construction of $T$, this player may use light edges in at most three literal gadgets. Hence, the cost she experiences is at most $6$ while any deviation should include a heavy edge out of $T$.

Now consider the non-critical player associated with node $v_1(c,\ell)$ in the literal gadget corresponding to the appearance of literal $\ell$ in clause $c$. In her strategy in $T$, she uses the edge $(v_1(c,\ell),l(c,\ell))$ (which is also used by the player associated with node $v_2(c,\ell)$) as well as at most four light edges in at most two literal gadgets. Hence, her cost is at most $K/2+4$. If she deviates to a strategy that contains edge $(v_1(c,\ell),l(c,\ell))$ but not the path from $l(c,\ell)$ to $r$ in $T$, she would experience a cost of at least $3K/2>K/2+4$ since, in addition to edge $(v_1(c,\ell),l(c,\ell))$, her strategy would include another heavy edge out of $T$ used only by her. Also, any deviation that contains the path $\langle v_1(c,\ell),v_2(c,\ell),u(c,\ell)\rangle$ would have cost at least $2K-\frac{1}{n_j+1}>K/2+6$, where $j$ is the label of the variable corresponding to literal $\ell$.
\qed\end{proof}

We proceed with Step 2. Consider a light assignment such that, for each clause $c$ and each literal $\ell$ that appears in $c$, exactly one among the light edges $(l(c,\ell),u(c,\bar{\ell}))$ and $(u(c,\bar{\ell}),u(c,\ell))$ in the literal gadget corresponding to the appearance of literal $\ell$ in clause $c$ is subsidized; we call such an assignment a {\em balanced} light assignment.

\begin{lemma}\label{lem:literal-gadget}
Consider the extension of the broadcast game on $G$ with a light assignment of subsidies. Then, the critical literal players have no incentive to change their strategies in $T$ if and only if the assignment is balanced light.
\end{lemma}

\begin{proof}
Consider the literal gadget that corresponds to the appearance of a literal $\ell$ in a clause $c$. If none of the two light edges of the gadget is subsidized, the critical literal player associated with node $v_3(c,\ell)$ has an incentive to deviate from her path $\langle v_3(c,\ell), u(c,\ell), u(c,\bar{\ell}), l(c,\ell)\rangle$ (where the cost she experiences is $K+\frac{1}{n_j-3}+\frac{1}{n_j}$) to the direct edge from $v_3(c,\ell)$ to $l(c,\ell)$ (where the cost she would experience would be $K+\frac{1}{n_j-3}$). If both of the two light edges of the gadget are subsidized, the critical literal player associated with node $v_2(c,\ell)$ has an incentive to deviate from her path $\langle v_2(c,\ell), v_1(c,\ell), l(c,\ell)\rangle$ (where the cost she experiences is $3K/2$) to the path $\langle v_2(c,\ell), u(c,\ell), u(c,\bar{\ell}), l(c,\ell)\rangle$ (where the cost she would experience would be $\frac{3K}{2}-\frac{1}{n_j+1}$). If instead one of the edges $(l(c,\ell),u(c,\bar{\ell}))$ and $(u(c,\bar{\ell}), u(c,\ell)$ is subsidized, the critical literal player associated with node $v_2(c,\ell)$ experiences a cost of $3K/2$ and has no incentive to change her strategy to the path $\langle v_2(c,\ell), u(c,\ell), u(c,\bar{\ell}), l(c,\ell)\rangle$ (since her cost there would be at least $\frac{3K}{2}-\frac{1}{n_j+1}+\min\{\frac{1}{n_j+1},\frac{1}{n_j-2}\}\geq \frac{3K}{2})$. Also, the critical literal player associated with node $v_3(c,\ell)$ experiences a cost of at most $K+\max\{\frac{1}{n_j-3}, \frac{1}{n_j}\}$ and has no incentive to change her path from $\langle v_2(c,\ell), u(c,\ell), u(c,\bar{\ell}), l(c,\ell)\rangle$ to the direct edge from $v_3(c,\ell)$ to $l(c,\ell)$ (since her cost there would be $K+\frac{1}{n_j-3}\geq K+\max\{\frac{1}{n_j-3}, \frac{1}{n_j}\}$). Also, note that if the critical literal player associated with node $v_2(c,\ell)$ or $v_3(c,\ell)$ deviates to a strategy that does not use the path from node $l(c,\ell)$ to the root-node $r$ in $T$, this would contain two heavy edges which are used only by this player for a cost of at least $2K$.
\qed\end{proof}

We proceed with Step 3 where our goal is to prove Corollary \ref{cor:consistency}. This will follow by Lemmas \ref{lem:ell-ell-consistency-gadget} and \ref{lem:ell-not-ell-consistency-gadget}. In their proof, we will use the following additional lemma.

\begin{lemma}\label{lem:distance}
Consider the extension of the broadcast game on $G$ with a balanced light assignment and a literal gadget corresponding to the appearance of literal $\ell$ in clause $c$. Then, any player who uses or deviates to a strategy that contains the path from node $l(c,\ell)$ to the root-node $r$ in $T$ experiences a cost of at most $\frac{1}{2n_j^2}$ on the edges of this path, where $j$ denotes the label of the variable corresponding to literal $\ell$.
\end{lemma}

\begin{proof}
If $j=1$, then our construction guarantees that $l(c,\ell)$ coincides with the root-node $r$. If $j=2$, then the path from $l(c,\ell)$ to the root-node $r$ (if any) may contain the light edges of at most one literal gadget corresponding to a variable with label $1$, among which exactly one is subsidized since the assignment is balanced light. Hence, a player that uses or deviates to a strategy that contains the path from node $l(c,\ell)$ to the root-node $r$ in $T$ experiences a cost of at most $\frac{1}{n_1-3}\leq \frac{2}{n_1}=\frac{1}{2n_2^2}$.
If $j\geq 3$, then the path from $l(c,\ell)$ to the root-node $r$ (if any) may contain the light edges of at most two literal gadgets corresponding to variables with labels at most $j-1$ and $j-2$; in each of these literal gadgets, exactly one among the two light edges is subsidized since the assignment is balanced light. Hence, a player that uses or deviates to a strategy that contains the path from node $l(c,\ell)$ to the root-node $r$ in $T$ experiences a cost of at most $\frac{1}{n_{j-2}-3}+\frac{1}{n_{j-1}-3}\leq \frac{2}{n_{j-1}}=\frac{1}{2n_j^2}$.
\qed\end{proof}

From now on, we will extensively use the following definition. For each literal $\ell$, we define the set of light edges
\begin{eqnarray*}
E(\ell) &=& \left\{(u(c,\bar{\ell}),u(c,\ell)) : \mbox{ for each clause $c$ that contains $\ell$ as literal}\right\}\\
&& \cup \left\{(l(c,\bar{\ell}),u(c,\ell)) : \mbox{ for each clause $c$ that contains $\bar{\ell}$ as literal}\right\}
\end{eqnarray*}

\begin{lemma}\label{lem:ell-ell-consistency-gadget}
Consider the extension of the broadcast game on $G$ with a balanced light assignment of subsidies. Given an $\ell$-$\ell$ consistency gadget for the appearance of literal $\ell$ in clauses $c_1$ and $c_2$, the following two sentences are equivalent:
\begin{itemize}
\item The critical consistency players associated with nodes $u_1(c_1,c_2,\ell)$ and $u_2(c_1,c_2,\ell)$ have no incentive to change their strategies in $T$.
\item The light edges of the gadget that are subsidized are either those belonging to $E(\ell)$ or those belonging to $E(\bar{\ell})$.
\end{itemize}
\end{lemma}

\begin{proof}
Consider an $\ell$-$\ell$ consistency gadget that corresponds to the appearance of literal $\ell$ in clauses $c_1$ and $c_2$. Let $j$ be the label of the variable corresponding to literal $\ell$. Since the assignment is balanced light, two light edges are subsidized: one among the edges $(l(c_1,\ell),u(c_1,\bar{\ell}))$ and $(u(c_1,\bar{\ell}),u(c_1,\ell))$ of the literal gadget corresponding to the appearance of literal $\ell$ in clause $c_1$ and one among the edges $(l(c_2,\ell),u(c_2,\bar{\ell}))$ and $(u(c_2,\bar{\ell}),u(c_2,\ell))$ of the literal gadget corresponding to the appearance of literal $\ell$ in clause $c_2$.

If the subsidized edges are $(l(c_1,\ell),u(c_1,\bar{\ell}))$ and $(u(c_2,\bar{\ell}), u(c_2,\ell))$, then the critical consistency player associated with node $u_2(c_1,c_2,\ell)$ has an incentive to change her strategy in $T$. The cost she experiences in her strategy in $T$ is at least $K+\frac{1}{n_j}$. The cost she would experience by deviating to the strategy consisting of path $\langle u_2(c_1,c_2,\ell), u(c_1,\bar{\ell}), l(c_1,\ell)\rangle$ and the path from $l(c_1,\ell)$ to $r$ in $T$ would be at most $K+\frac{1}{2n_j}+\frac{1}{2n_j^2}<K+\frac{1}{n_j}$. In the left part of this inequality as well as in the inequalities below we have used Lemma \ref{lem:distance} in order to bound the cost experienced by the critical consistency player on the edges of the path from $l(c_1,\ell)$ to $r$ in $T$.

If the subsidized edges are $(u(c_1,\bar{\ell}), u(c_1,\ell))$ and $(l(c_2,\ell),u(c_2,\bar{\ell}))$, then the critical consistency player associated with node $u_1(c_1,c_2,\ell)$ has an incentive to change her strategy in $T$; due to symmetry, the argument is the same as above.

If the subsidized edges are the edges $(l(c_1,\ell),u(c_1,\bar{\ell}))$ and $(l(c_2,\ell),u(c_2,\bar{\ell}))$ that belong to set $E(\bar{\ell})$, then no critical consistency player has an incentive to change her strategy in $T$. The critical consistency player associated with node $u_1(c_1,c_2,\ell)$ experiences a cost of at most $K+\frac{1}{2n_j^2}$ while the cost she would experience by changing her strategy to the one that uses path $\langle u_1(c_1,c_2,\ell), u(c_2,\bar{\ell})), l(c_2,\ell)\rangle$ and the path in $T$ from $l(c_2,\ell)$ to $r$ would be at least $K+\frac{1}{2n_j} \geq K+\frac{1}{2n_j^2}$. Also, note that if the critical consistency player associated with node $u_1(c_1,c_2,\ell)$ deviates to a strategy that does not use the path from node $l(c_1,\ell)$ to the root-node $r$ in $T$, this would contain two heavy edges which are used only by this player for a cost of at least $2K$. Due to symmetry, the argument for the critical consistency player associated with node $u_2(c_1,c_2,\ell)$ is the same.

If the subsidized edges are the edges $(u(c_1,\bar{\ell}),u(c_1,\ell))$ and $(u(c_2,\bar{\ell}),u(c_2,\ell))$ that belong to set $E(\ell)$, then no critical consistency player has an incentive to change her strategy in $T$ either. The critical consistency player associated with node $u_1(c_1,c_2,\ell)$ experiences a cost of at most $K+\frac{1}{n_j}+\frac{1}{2n_j^2}$ while the cost she would experience by changing her strategy to the one that uses path $\langle u_1(c_1,c_2,\ell), u(c_2,\bar{\ell})), l(c_2,\ell)\rangle$ and the path in $T$ from $l(c_2,\ell)$ to $r$ would be at least $K+\frac{1}{2n_j}+\frac{1}{n_j+1} \geq K+\frac{1}{n_j}+\frac{1}{2n_j^2}$. Also, note that if the critical consistency player associated with node $u_2(c_1,c_2,\ell)$ deviates to a strategy that does not use the path from node $l(c_2,\ell)$ to the root-node $r$ in $T$, this would contain two heavy edges which are used only by this player for a cost of at least $2K$. Due to symmetry, the argument for the critical player associated with node $u_2(c_1,c_2,\ell)$ is the same.
\qed\end{proof}

\begin{lemma}\label{lem:ell-not-ell-consistency-gadget}
Consider the extension of the broadcast game on $G$ with a balanced light assignment of subsidies. Given an $\ell$-$\bar{\ell}$ consistency gadget for the appearance of literals $\ell$ and $\bar{\ell}$ in clauses $c_1$ and $c_2$, the following two sentences are equivalent:
\begin{itemize}
\item The critical consistency players associated with nodes $u_1(c_1,c_2,\ell)$ and $u_2(c_1,c_2,\ell)$ have no incentive to change their strategies in $T$.
\item The light edges of the gadget that are subsidized are either those belonging to $E(\ell)$ or those belonging to $E(\bar{\ell})$.
\end{itemize}
\end{lemma}

\begin{proof}
Consider an $\ell$-$\bar{\ell}$ consistency gadget that corresponds to the appearance of literals $\ell$ and $\bar{\ell}$ in the clauses $c_1$ and $c_2$, respectively. Let $j$ be the label of the variable corresponding to literals $\ell$ and $\bar{\ell}$. Since the assignment is balanced light, two light edges are subsidized: one among the edges $(l(c_1,\ell),u(c_1,\bar{\ell}))$ and $(u(c_1,\bar{\ell}),u(c_1,\ell))$ of the literal gadget corresponding to the appearance of literal $\ell$ in clause $c_1$ and one among the edges $(l(c_2,\bar{\ell}),u(c_2,\ell))$ and $(u(c_2,\ell),u(c_2,\bar{\ell}))$ of the literal gadget corresponding to the appearance of literal $\bar{\ell}$ in clause $c_2$.

If the subsidized edges are $(l(c_1,\ell),u(c_1,\bar{\ell}))$ and $(l(c_2,\bar{\ell}),u(c_2,\ell))$, then the critical consistency player associated with node $u_1(c_1,c_2,\ell)$ has an incentive to change her strategy in $T$. The cost she experiences in her strategy in $T$ is at least $K+\frac{1}{n_j-3}$. The cost she would experience by deviating to the strategy consisting of the path $\langle u_1(c_1,c_2,\ell), u(c_2,\ell), l(c_2,\bar{\ell})\rangle$ and the path from $l(c_2,\bar{\ell})$ to $r$ in $T$ would be at most $K+\frac{1}{n_j}+\frac{1}{2n_j^2}+\frac{1}{2n_j^2}< K+\frac{1}{n_j-3}$. In the left part of this inequality as well as in the inequalities below we have used Lemma \ref{lem:distance} in order to bound the cost experienced by the critical consistency player on the edges of the path from $l(c_1,\ell)$ to $r$ in $T$.

If the subsidized edges are $(u(c_1,\bar{\ell}),u(c_1,\ell))$ and $(u(c_2,\ell),u(c_2,\bar{\ell}))$, then the critical consistency player associated with node $u_2(c_1,c_2,\ell)$ has an incentive to change her strategy in $T$. The cost she experiences in her strategy in $T$ is at least $K+\frac{1}{n_j}$. The cost she would experience by deviating to the strategy consisting of the path $\langle u_2(c_1,c_2,\ell), u(c_1,\ell), u(c_1,\bar{\ell}), l(c_1,\ell)\rangle$ and the path from $l(c_1,\ell)$ to $r$ in $T$ would be at most $K+\frac{1}{n_j+1}+\frac{1}{2n_j^2} < K+\frac{1}{n_j}$.

If the subsidized edges are the edges $(l(c_1,\ell),u(c_1,\bar{\ell}))$ and $(u(c_2,\ell),u(c_2,\bar{\ell}))$ of the set $E(\bar{\ell})$, then no critical consistency player has an incentive to change her strategy in $T$. The critical consistency player associated with node $u_2(c_1,c_2,\ell)$ experiences a cost of at most $K+\frac{1}{n_j}+\frac{1}{2n_j^2}$. The cost she would experience by deviating to the strategy consisting of the path $\langle u_2(c_1,c_2,\ell), u(c_1,\ell), u(c_1,\bar{\ell}), l(c_1,\ell)\rangle$ and the path from $l(c_1,\ell)$ to $r$ in $T$ would be at least $K+\frac{1}{n_j-2} \geq  K+\frac{1}{n_j}+\frac{1}{2n_j^2}$. Also, note that if the critical consistency player associated with node $u_2(c_1,c_2,\ell)$ deviates to a strategy that does not use the path from node $l(c_1,\ell)$ to the root-node $r$ in $T$, this would contain two heavy edges which are used only by this player for a cost of at least $2K$. The critical consistency player associated with node $u_1(c_1,c_2,\ell)$ experiences a cost of at most $K+\frac{1}{n_j-3}+\frac{1}{2n_j^2}$. The cost she would experience by deviating to the strategy consisting of the path $\langle u_1(c_1,c_2,\ell), u(c_2,\ell), l(c_2,\bar{\ell})\rangle$ and the path from $l(c_2,\bar{\ell})$ to $r$ in $T$ would be at least $K+\frac{1}{n_j}+\frac{1}{2n_j^2}+\frac{1}{n_j+1} \geq K+\frac{1}{n_j-3}+\frac{1}{2n_j^2}$. Also, note that if the critical consistency player associated with node $u_1(c_1,c_2,\ell)$ deviates to a strategy that does not use the path from node $l(c_2,\bar{\ell})$ to the root-node $r$ in $T$, this would contain two heavy edges which are used only by this player for a cost of at least $2K$.

If the subsidized edges are the edges $(u(c_1,\bar{\ell}),u(c_1,\ell)))$ and $(l(c_2,\bar{\ell}),u(c_2,\ell))$, then no critical consistency player has an incentive to change her strategy in $T$ either. The critical consistency player associated with node $u_2(c_1,c_2,\ell)$ experiences a cost of at most $K+\frac{1}{2n_j^2}$. The cost she would experience by deviating to the strategy consisting of the path $\langle u_2(c_1,c_2,\ell), u(c_1,\ell), u(c_1,\bar{\ell}), l(c_1,\ell)\rangle$ and the path from $l(c_1,\ell)$ to $r$ in $T$ would be at least $K+\frac{1}{n_j+1} \geq  K+\frac{1}{2n_j^2}$. Also, note that if the critical consistency player associated with node $u_2(c_1,c_2,\ell)$ deviates to a strategy that does not use the path from node $l(c_1,\ell)$ to the root-node $r$ in $T$, this would contain two heavy edges which are used only by this player for a cost of at least $2K$. The critical consistency player associated with node $u_1(c_1,c_2,\ell)$ experiences a cost of at most $K+\frac{1}{n_j}+\frac{1}{2n_j^2}$. The cost she would experience by deviating to the strategy consisting of the path $\langle u_1(c_1,c_2,\ell), u(c_2,\ell), l(c_2,\bar{\ell})\rangle$ and the path from $l(c_2,\bar{\ell})$ to $r$ in $T$ would be at least $K+\frac{1}{n_j}+\frac{1}{2n_j^2}$. Also, note that if the critical consistency player associated with node $u_1(c_1,c_2,\ell)$ deviates to a strategy that does not use the path from node $l(c_2,\bar{\ell})$ to the root-node $r$ in $T$, this would contain two heavy edges which are used only by this player for a cost of at least $2K$.
\qed\end{proof}

We call a balanced light assignment such that for each variable $x$ either the edges of $E(x)$ or the edges of $E(\bar{x})$ are subsidized (i.e., if the second sentence in Lemmas \ref{lem:ell-ell-consistency-gadget} and \ref{lem:ell-not-ell-consistency-gadget} holds for every literal) a {\em consistent} balanced light assignment. Under this definition, Lemmas \ref{lem:ell-ell-consistency-gadget} and \ref{lem:ell-not-ell-consistency-gadget} yield the following corollary.

\begin{cor}\label{cor:consistency}
Consider the extension of the broadcast game on $G$ with a balanced light assignment of subsidies. The critical consistency players have no incentive to deviate from their strategy in $T$ if and only if the assignment is consistent balanced light.
\end{cor}

Note that there exists a one-to-one and onto mapping between consistent balanced light assignments of subsidies and assignments of values to the variables of $\phi$ by setting $x=1$ for every variable $x$ such that the edges in $E(x)$ are subsidized (and $x=0$ otherwise).

We proceed with Step 4.

\begin{lemma}\label{lem:clause-gadget}
Consider a light assignment of subsidies. The following are equivalent:
\begin{itemize}
\item $T$ is enforced as an equilibrium in the extension of the broadcast game.
\item (a) The assignment of subsidies is consistent balanced light. (b) For each clause $c=(\ell_1,\ell_2,\ell_3)$, at least one of the following is true:
\begin{itemize}
\item the edges of $E(\ell_1)$ are subsidized.
\item the edges of $E(\ell_2)$ are subsidized.
\item the edges of $E(\ell_3)$ are subsidized.
\end{itemize}
\end{itemize}
\end{lemma}

\begin{proof}
If $T$ is enforced as an equilibrium, then no player has any incentive to deviate from her strategy in $T$. By Lemma \ref{lem:literal-gadget} and Corollary \ref{cor:consistency}, we obtain (a). We will show that the fact that the critical clause players have no incentive to deviate from their strategy in $T$ implies (b). Assume otherwise that there exists a clause $c=(\ell_1,\ell_2,\ell_3)$ such that the edges in $E(\ell_1)\cup E(\ell_2)\cup E(\ell_3)$ are not subsidized. This means that the light edges $(u(c,\bar{\ell_1}), u(c,\ell_1))$, $(u(c,\bar{\ell_2}),u(c,\ell_2))$, and $(u(c,\bar{\ell_3}),u(c,\ell_3))$ in the three literal gadgets corresponding to the appearance of literals $\ell_1$, $\ell_2$, and $\ell_3$ in $c$ are not subsidized. Let $j_1$, $j_2$, and $j_3$ be the labels of the variables corresponding to literals $\ell_1$, $\ell_2$, and $\ell_3$. Then, the cost the critical clause player associated with node $v(c)$ experiences on her strategy in $T$ is at least
$K+\frac{1}{n_{j_1}-3}+\frac{1}{n_{j_2}-3}+\frac{1}{n_{j_3}-3}$. Hence, she has an incentive to deviate and use the direct edge $(v(c),r)$ of weight $K+\frac{1}{n_{j_1}}+\frac{1}{n_{j_2}-3}+\frac{1}{n_{j_3}-3}$ which contradicts our assumption.

On the other hand, if (a) and (b) are true, we will show that no player has any incentive to deviate from her strategy in $T$. Again, by Lemma \ref{lem:literal-gadget} and Corollary \ref{cor:consistency}, we have that (a) implies that critical literal players and critical consistency players have no incentive to deviate while Lemma \ref{lem:non-critical} implies that non-critical players have no incentive to deviate. We will show that (b) implies that no critical clause player has an incentive to deviate either. Indeed, consider a clause $c=(\ell_1,\ell_2,\ell_3)$ and let $j_1$, $j_2$, and $j_3$ be the labels of the variables corresponding to literals $\ell_1$, $\ell_2$, and $\ell_3$. (b) implies that one among the three light edges $(u(c,\bar{\ell_i}),u(c,\ell_i))$ for $i\in \{1,2,3\}$ in the three literal gadgets corresponding to the appearance of literal $\ell_i$ is subsidized and, due to (a), if a light edge $(u(c,\bar{\ell_i}), u(c,\ell_i))$ is not subsidized, then the light edge $(l(c,\ell_i), u(c,\bar{\ell_i})$ is subsidized. Hence, the cost of the critical clause player associated with node $v(c)$ is at most $K+\max\left\{\frac{1}{n_{j_1}}+\frac{1}{n_{j_2}-3}+\frac{1}{n_{j_3}-3}, \frac{1}{n_{j_1}-3}+\frac{1}{n_{j_2}}+\frac{1}{n_{j_3}-3}, \frac{1}{n_{j_1}-3}+\frac{1}{n_{j_2}-3}+\frac{1}{n_{j_3}}\right\} \leq K+\frac{1}{n_{j_1}}+\frac{1}{n_{j_2}-3}+\frac{1}{n_{j_3}-3}$ and, hence, this player has no incentive to use edge $(v(c),r)$. Also, she has no incentive to deviate to another strategy which includes edge $(v(c),u(c,\ell_3))$ but not the path from $u(c,\ell_3)$ to $r$ in $T$. Any such path contains two heavy edges which would be used only by the critical clause player associated with node $v(c)$ and, hence, her cost would be at least $2K$.
\qed\end{proof}

Note that the mapping mentioned above is a one-to-one and onto mapping between consistent balanced light assignments of subsidies and truthful assignments of $\phi$. We obtain the following corollary.
\begin{cor}\label{cor:T-sat-eq}
There exists a light assignment of subsidies that enforces $T$ as an equilibrium in the extension of the broadcast game on $G$ if and only if $\phi$ is satisfiable.\end{cor}

Hence, if $\phi$ is not satisfiable, then the minimum amount of all-or-nothing subsidies necessary to enforce $T$ as an equilibrium in the extension of the broadcast game is at least $K$ (some heavy edge has to be subsidized). We conclude that distinguishing between whether all-or-nothing subsidies of cost (at most) $3|C|$ are sufficient or subsidies of cost at least $K$ are necessary in order to enforce $T$ as an equilibrium in the extension of the broadcast game is NP-hard. Theorem \ref{thm:hardness-all-or-nothing} follows by setting $K$ to be arbitrarily large compared to $|C|$.
\qed
\end{proof}

Theorem \ref{thm:hardness-all-or-nothing} probably indicates that the only approximation guarantee we should hope for all-or-nothing SNE is to bound the amount of subsidies as a constant fraction of the weight of an optimal design. The next statement implies that significantly more subsidies may be necessary compared to the standard version of SNE in order to enforce a minimum spanning tree as an equilibrium.

\begin{theorem}\label{thm:all-or-nothing-lb}
For every $\epsilon>0$, there exist a broadcast game on a graph $G$ and a minimum spanning tree $T$ of $G$ such that the cost of any all-or-nothing subsidy assignment that enforces $T$ as an equilibrium in the extension of the broadcast game is at least $\left(\frac{e}{2e-1}-\epsilon\right)\wgt(T)$.
\end{theorem}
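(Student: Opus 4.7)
My plan is to produce, for every $\epsilon>0$, a family of broadcast-game instances on graphs $G_n$ with a designated minimum spanning tree $T_n$ (parameterized by $n\to\infty$) such that every all-or-nothing subsidy assignment enforcing $T_n$ has cost at least $\left(\frac{e}{2e-1}-\epsilon\right)\wgt(T_n)$. The starting point is a cycle-style construction in the spirit of Theorem~\ref{thm:lb}: a path $T_n = r-v_1-v_2-\cdots-v_n$ together with one alternative edge $(v_n,r)$ that closes a cycle with $T_n$. As in Theorem~\ref{thm:lb}, the player associated with $v_n$ is the only one whose deviation need be considered in the analysis, and her sole alternative is to use that edge directly. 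The difference is that, rather than assigning unit weight uniformly, I would tune the weights of the cycle edges (and, correspondingly, of the alternative edge, which by the MST constraint must be at least the maximum edge-weight on the cycle) so as to widen the integrality gap between fractional and integral subsidies.

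Concretely, the idea is to introduce one ``critical'' heavy edge at a strategically chosen position on the path—around the crowdedness value $m_0 \approx n/e$, which marks where the fractional-optimal subsidy set ends in Theorem~\ref{thm:lb}—and to set its weight $W$ together with the alternative-edge weight $c = W$ (to satisfy the MST condition) so that preventing $v_n$'s deviation via an all-or-nothing assignment requires the designer to choose between two unattractive options: (i)~subsidize the heavy edge (paying cost $W$, which is a substantial fraction of $\wgt(T_n)$); or (ii)~avoid the heavy edge and instead subsidize so many unit edges that the total subsidy cost remains a comparable fraction of $\wgt(T_n)$. By writing down both options as a function of the two parameters $n$ and $W$, balancing the resulting expressions, and taking $n\to\infty$, I expect the optimal balance to yield exactly the ratio $\frac{e}{2e-1} = \frac{1}{2-1/e}$. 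The analysis mirrors Theorem~\ref{thm:lb}: one translates the subsidy count into a harmonic-sum inequality, then uses $x \geq \ln(1+x)$ to convert it into the claimed asymptotic bound.

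The main obstacle will be ruling out ``mixed'' all-or-nothing strategies that could slip beneath the bound. Because all-or-nothing subsidies are indivisible, one must verify that no clever combination of partial unit-edge subsidies together with a strict subset of the heavy-edge structure can achieve the required benefit at lower cost than either pure option. This requires a case analysis that enumerates (up to symmetry) every possible subsidy subset $S$ of edges of $T_n$, shows that each either fails to reduce $v_n$'s cost to at most $c$ or has total weight exceeding $\left(\frac{e}{2e-1}-\epsilon\right)\wgt(T_n)$, and then verifies that the MST condition on the alternative edge is compatible with both these options simultaneously. I anticipate that the cleanest way to carry this out is via the same ``pack subsidies on the least crowded edges'' argument of Theorem~\ref{thm:lb}, refined to account for the discrete choice of whether the critical edge is included.
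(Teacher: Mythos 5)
There is a genuine gap: the single-chord instance you describe cannot force the claimed ratio, and the reason is precisely the MST constraint you yourself note. In your construction the only binding requirement is that the player at $v_n$ prefers her tree path to the chord $(v_n,r)$ of weight $c$, and the MST condition forces $c\geq W$. This couples the deviation budget to the heavy weight, and the two regimes both collapse: if $W$ grows (say $W=\Theta(n)$), then the heavy edge at crowdedness $m_0\approx n/e$ contributes only $W/m_0=\Theta(1)$ to $v_n$'s cost while the unit edges contribute $O(\log n)$, so her unsubsidized cost is far below $c=W$ and \emph{no} subsidies are needed at all; if $W=O(1)$, the heavy edge's unsubsidized contribution $W/m_0\to 0$ is negligible, the designer never has any reason to subsidize it, and the instance degenerates to the uniform case of Theorem \ref{thm:lb}, giving only $\approx \wgt(T)/e$. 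In other words, placing the heavy edge at a crowded position destroys the all-or-nothing dilemma you are trying to create, and with a single deviation constraint plus $c\geq W$ you cannot make both branches (subsidize the heavy edge / avoid it) simultaneously expensive relative to $\wgt(T)$; no balancing of $n$ and $W$ yields $\frac{e}{2e-1}$.

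The paper's construction differs in the two points your proposal is missing. First, the heavy edge (weight $1$) is placed incident to the leaf $v_n$, so it is used by that player alone: leaving it unsubsidized already charges her its full weight, which equals her deviation cost via the chord $(r,v_n)$ of weight $1$, and hence forces \emph{all} remaining path edges (each of weight $x$) to be fully subsidized, costing $(n-1)x$. Second, there is a \emph{second} chord $(r,v_{n-1})$ of small weight $x$, creating a second binding constraint: even if the designer pays $1$ to subsidize the heavy edge, the player at $v_{n-1}$ must still be protected against a deviation of cost only $x$, which (by the packing argument of Theorem \ref{thm:lb}) forces at least about $(n/e)x$ additional subsidies on the light edges. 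Choosing $x=(n-n/e+1)^{-1}$ balances the two branches at $\approx\frac{n-1}{n-n/e+1}$, while $\wgt(T)=\frac{2n-n/e}{n-n/e+1}$, so the ratio tends to $\frac{1}{2-1/e}=\frac{e}{2e-1}$. Your instinct to balance two discrete options and your target value are right, but without a lightly weighted second alternative edge (and with the heavy edge moved to crowdedness $1$) the dichotomy you need never materializes; also note that your exhaustive case analysis over subsets is unnecessary once the construction is set up this way, since the argument reduces to the single yes/no choice on the heavy edge followed by the packing bound of Theorem \ref{thm:lb}.
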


\begin{proof}
We will define a graph $G$ with $n+1$ nodes which has a minimum spanning tree that consists of a path $\langle r, v_1, v_2, ..., v_n\rangle$. Let $x=\left(n-n/e+1\right)^{-1}$. Edges $(r,v_1)$ and $(v_i, v_{i+1})$ for $i = 1, ..., n-2$ have weight $x$. Edge $(v_{n-1},v_n)$ has weight $1$. The graph contains two additional edges: edge $(r,v_{n-1})$ has weight $x$ and edge $(r,v_n)$ has weight $1$. If we do not put subsidies on the edge $(v_{n-1},v_n)$, then we have to put subsidies on each of the remaining edges in the path in order to guarantee that the player associated with node $v_{n}$ has no incentive to use the direct edge $(v_{n},r)$, i.e., a total amount of $(n-1)x$ as subsidies. If we put subsidies on the edge $(v_{n-1},v_n)$, we still have to guarantee that the player associated with node $v_{n-1}$ has no incentive to deviate to the direct edge $(v_{n-1},r)$. Using the same reasoning as in the proof of Theorem \ref{thm:lb}, we will need an amount of at least $(n/e-2)x$ as subsidies on the edges of the path $\langle r, v_1, v_2, ..., v_{n-1}\rangle$, for a total of $1+(n/e-2)x$. Due to the definition of $x$, we have that the amount of subsidies is at least $\frac{n-1}{n-n/e+1}$ in both cases while the total weight of $T$ is $\frac{2n-n/e}{n-n/e+1}$. The bound follows by selecting $n$ to be sufficiently large.
\qed\end{proof}

\section{Open problems}\label{sec:open}
Our work has revealed several open questions. Concerning the particular results obtained, it is interesting to design a combinatorial algorithm for SNE which, on input a graph $G$ and a minimum spanning tree $T$ on $G$, enforces $T$ as an equilibrium on the corresponding broadcast game using minimum subsidies. Lemma \ref{lem:lp-simple} may be helpful in this direction. For the integral version of SNE, we have left open the question whether it is always possible to enforce a given minimum spanning tree as an equilibrium in a broadcast game using all-or-nothing subsidies of cost strictly smaller than the weight of a minimum spanning tree. Given our negative result in Theorem \ref{thm:hardness-all-or-nothing}, this is probably the only approximation that makes sense. It is tempting to conjecture that our lower bound is tight, i.e., there is an algorithm that always uses a fraction of at most $\frac{e}{2e-1}$ of the weight of the minimum spanning tree as subsidies in order to do so.

Approximating SND would also be interesting. Given the known hardness statements (e.g., \cite{S10}) or the lack of positive results concerning the complexity of computing equilibria, this is a far more challenging goal. A concrete question for SND instances consisting of broadcast games could be the following: can we compute in polynomial time an equilibrium tree using subsidies of cost at most an $\alpha$ fraction of the weight of the minimum spanning tree? Our results (Theorems \ref{thm:SNE-P} and \ref{thm:upper-bound}) indicate that the answer is clearly positive if $\alpha\geq 1/e$. Is this also possible if $\alpha$ is an arbitrarily small constant? Definitely, more general instances of SND (e.g., involving multicast games) are challenging as well. Finally, variations of SNE and SND that consider deviations of coalitions of players (as opposed to unilateral deviations), players with different demands \cite{Alb08,CR06}, or different cost sharing protocols \cite{CRV08} deserve investigation.

\small\bigskip\noindent{\bf Acknowledgments.} We thank Edith Elkind, Ning Chen, Nick Gravin, and Alex Skopalik for helpful discussions.

\small

\end{document}